\documentclass[12pt,draftclsnofoot]{IEEEtran}
\onecolumn
\usepackage{amsmath, amssymb,amsthm,cite}
\usepackage[dvips]{graphics}
\usepackage{graphicx}
\usepackage{psfrag}
\usepackage{dblfloatfix}
\usepackage{mathtools}
\usepackage{algpseudocode}
\usepackage{algorithm}
\usepackage{slashbox}
\DeclareMathOperator*{\argmax}{\arg\max}
\DeclarePairedDelimiter{\floor}{\lfloor}{\rfloor}
\allowdisplaybreaks
\newtheorem{lemma}{Lemma}
\newtheorem{theorem}{Theorem}

\theoremstyle{definition}

\newtheorem{definition}{Definition}
\makeatletter
\def\blfootnote{\gdef\@thefnmark{}\@footnotetext}
\makeatother
\begin{document}
\title{The Feedback Capacity of the $(1,\infty)$-RLL Input-Constrained Erasure Channel}
\author{Oron Sabag, Haim H. Permuter and Navin Kashyap}
\maketitle
\begin{abstract}
The input-constrained erasure channel with feedback is considered, where the binary input sequence contains no consecutive ones, i.e., it satisfies the $(1,\infty)$-RLL constraint. We derive the capacity for this setting, which can be expressed as $C_{\epsilon}=\max_{0 \leq p \leq \frac{1}{2}}\frac{H_{b}(p)}{p+\frac{1}{1-\epsilon}}$, where $\epsilon$ is the erasure probability and $ H_{b}(\cdot)$ is the binary entropy function. Moreover, we prove that a-priori knowledge of the erasure at the encoder does not increase the feedback capacity. The feedback capacity was calculated using an equivalent dynamic programming (DP) formulation with an optimal average-reward that is equal to the capacity. Furthermore, we obtained an optimal encoding procedure from the solution of the DP, leading to a capacity-achieving, zero-error coding scheme for our setting. DP is thus shown to be a tool not only for solving optimization problems such as capacity calculation, but also for constructing optimal coding schemes. The derived capacity expression also serves as the only non-trivial upper bound known on the capacity of the input-constrained erasure channel without feedback, a problem that is still open.
\end{abstract}
\begin{IEEEkeywords}
Feedback capacity, constrained coding, dynamic programming, binary erasure channel, runlength-limited(RLL) constraints.
\end{IEEEkeywords}
\section{Introduction}
\blfootnote{Part of this work will be presented at the 2015 Information Theory Workshop (ITW 2015), Jerusalem, Israel. The work of O. Sabag and H. H. Permuter was partially supported by the European Research Council (ERC) starting grant. All authors have also been partially supported by a Joint UGC-ISF research grant. O. Sabag and H. H. Permuter are with the department of Electrical and Computer Engineering, Ben-Gurion University of the Negev, Beer-Sheva, Israel (oronsa@post.bgu.ac.il, haimp@bgu.ac.il). N. Kashyap is with the department of Electrical Communication Engineering, Indian Institute of Science, Bangalore, India (nkashyap@ece.iisc.ernet.in).}
Memoryless channels have been the focus of research activity in information theory since they were introduced in 1948 by Shannon \cite{Shannon48}. The capacity of a memoryless channel has an elegant, single-letter expression, $C = \sup_{p(x)} I(X;Y)$, and this can be calculated for a broad range of channels \cite{Blahut72,Arimoto72}. When considering a memoryless channel with input that is constrained, the capacity is given by the maximum mutual information rate between the input and output sequences. The capacity calculation of such channels involves a calculation of the entropy rate of a Hidden Markov Model (HMM), since the transmission of a constrained sequence through a memoryless channel results in an output sequence that is described by an HMM. This makes the capacity of input-constrained memoryless channels difficult to compute \cite{vontobel_generalization,han_constrained_BSC_BEC,wolf_RLL,han_RLL_BSC}.
\begin{figure}[t]
\centering
    \psfrag{A}[b][][.8]{Constrained}
    \psfrag{K}[][][.8]{Encoder}
    \psfrag{B}[][][1]{$P_{Y|X}$}
    \psfrag{C}[][][1]{Decoder}
    \psfrag{D}[][][0.8]{Unit Delay}
    \psfrag{E}[][][1]{$M\in 2^{nR}$}
    \psfrag{F}[][][.9]{$x_i(m,y^{i-1})$}
    \psfrag{G}[][][1]{$y_i$}
    \psfrag{H}[][][1]{$y_i$}
    \psfrag{I}[][][1]{$y_{i-1}$}
    \psfrag{J}[][][1]{$\hat{M}(Y^n)$}
    \includegraphics[scale = 0.7]{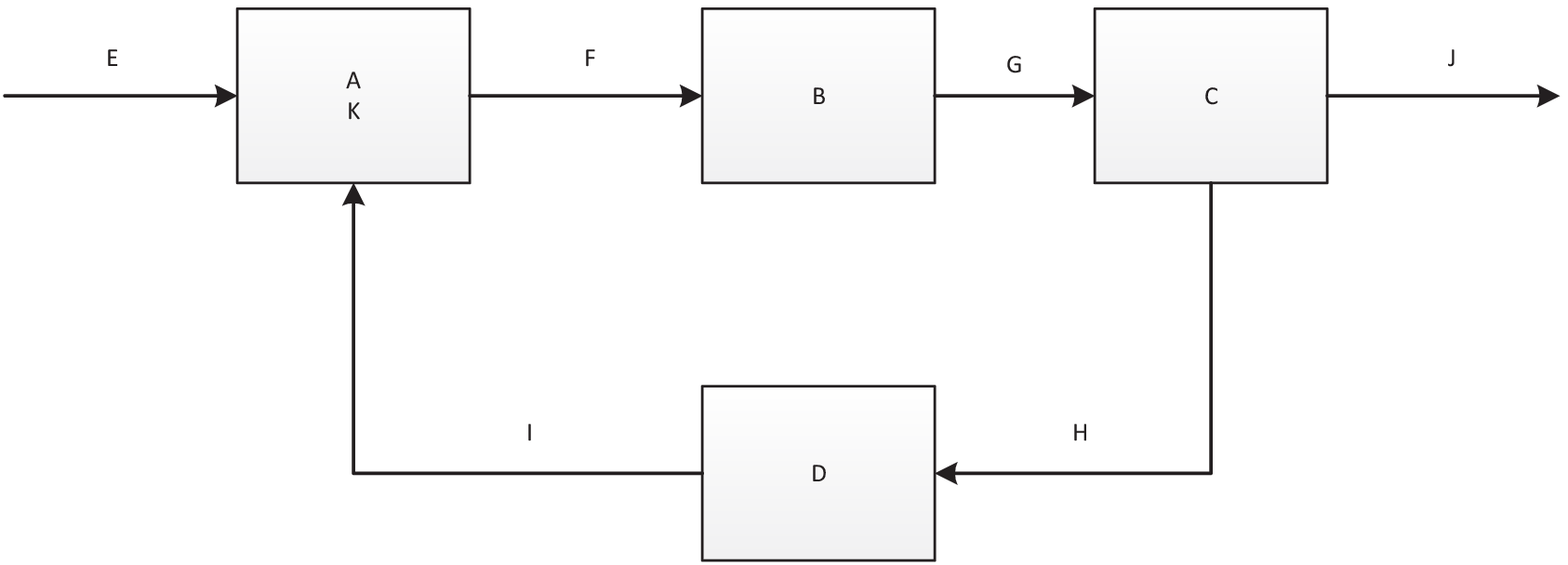}
    \caption{System model for an input-constrained memoryless channel with perfect feedback.}
    \label{fig:channel}
\end{figure}

Constrained coding arises naturally in many communication and recording systems\cite{Marcus98,Immink04}; a common constraint that is useful in magnetic and optical recording is the $(d,k)$-runlength limited (RLL) constraint. A binary sequence satisfies this constraint if the number of zeros between any pair of successive ones is at least $d$ and at most $k$. This constraint has also recently appeared in code designs for energy harvesting systems, where communication is used not only for information transfer but also for charging the receiver's battery\cite{osvalso_charge_battery}. In this paper, we focus on the special case of the $(1,\infty)$-RLL constraint, in which no consecutive ones are allowed.

It is well known that feedback does not increase the capacity of a memoryless channel, as shown by Shannon \cite{shannon56}. However, Shannon's argument does not apply to memoryless channels with constrained inputs, and special tools are required to determine the capacity of such channels with or without feedback.

We consider an $(1,\infty)$-RLL input-constrained binary erasure channel (BEC) with feedback, represented pictorially in Fig. \ref{fig:channel}, with the channel depicted in Fig. \ref{fig:erasure}. Based on the message $M$ and the previous channel outputs, $y^{i-1}$, the encoder chooses the input $X_{i}$, such that the input constraint is satisfied. The mechanism of the BEC is simple: each transmitted bit is transformed into an erasure symbol with probability $\epsilon$ or received successfully with its complementary probability. The decoder estimates the message $\hat{M}$ with low probability of error as a function of the output sequence $Y^{n}$. In this paper, we derive the explicit expression for the feedback capacity of the $(1,\infty)$-RLL input-constrained BEC.
\begin{figure}[t]
\centering
    \psfrag{A}[][][1]{$0$}
    \psfrag{B}[][][1]{$1$}
    \psfrag{C}[][][1]{$0$}
    \psfrag{D}[][][1]{$?$}
    \psfrag{E}[][][1]{$1$}
    \psfrag{F}[][][1]{$\epsilon$}
    \psfrag{G}[][][1]{$1-\epsilon$}
    \psfrag{X}[][][1]{$X$}
    \psfrag{Y}[][][1]{$Y$}
    \includegraphics[scale = 0.5]{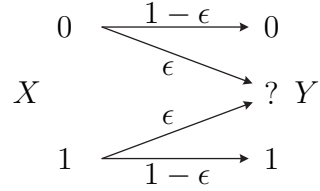}
    \caption{Erasure channel with erasure probability $\epsilon$.}
    \label{fig:erasure}
\end{figure}

The feedback capacity that is derived here also serves as an upper bound on the capacity of the $(1,\infty)$-RLL input-constrained BEC without feedback, a problem that is still open. A lower bound on the capacity of the non-feedback setting was derived in \cite{constrained_erasure_nofeedback_isit} by considering an input that is restricted to first-order Markov process (first-order capacity). The lower bound in \cite{constrained_erasure_nofeedback_isit} and our feedback capacity are presented in Fig. \ref{fig:comparison}, and it can be seen that maximal gap is attained at $\epsilon=0.71$, where the first-order capacity is $\sim 0.2354$ while the feedback capacity is $\sim 0.2547$.
\begin{figure}[h]
\centering
    \psfrag{A}[t][][.8]{Erasure probability $\epsilon$}
    \psfrag{B}[t][][1]{}
    \psfrag{C}[b][][1]{}
    \psfrag{E}[l][][.8]{Feedback Capacity}
    \psfrag{D}[r][][.8]{First-order capacity \cite{constrained_erasure_nofeedback_isit}}
    \psfrag{q}[][][.4]{$\mathbf{0}$}
    \psfrag{w}[][][.4]{$\mathbf{0.1}$}
    \psfrag{e}[][][.4]{$\mathbf{0.2}$}
    \psfrag{r}[][][.4]{$\mathbf{0.3}$}
    \psfrag{t}[][][.4]{$\mathbf{0.4}$}
    \psfrag{y}[][][.4]{$\mathbf{0.5}$}
    \psfrag{u}[][][.4]{$\mathbf{0.6}$}
    \psfrag{i}[][][.4]{$\mathbf{0.7}$}
    \psfrag{o}[][][.4]{$\mathbf{0.8}$}
    \psfrag{p}[][][.4]{$\mathbf{0.9}$}
    \psfrag{l}[][][.4]{$\mathbf{1}$}
    \psfrag{a}[r][][.4]{$\mathbf{0}$}
    \psfrag{s}[r][][.4]{$\mathbf{0.1}$}
    \psfrag{d}[r][][.4]{$\mathbf{0.2}$}
    \psfrag{f}[r][][.4]{$\mathbf{0.3}$}
    \psfrag{g}[r][][.4]{$\mathbf{0.4}$}
    \psfrag{h}[r][][.4]{$\mathbf{0.5}$}
    \psfrag{j}[r][][.4]{$\mathbf{0.6}$}
    \psfrag{k}[r][][.4]{$\mathbf{0.7}$}
    \includegraphics[scale = 0.32]{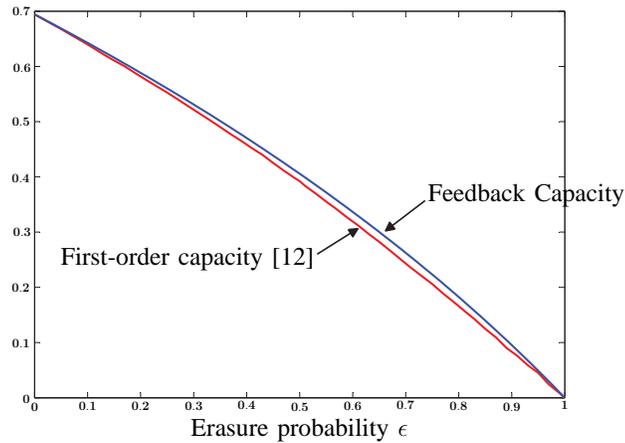}
    \caption{Lower and upper bounds on the capacity of the input-constrained BEC without feedback.}
    \label{fig:comparison}
\end{figure}

The relation between feedback-capacity calculation and dynamic programming (DP) first appeared in Tatikonda's thesis \cite{Tatikonda00}. Subsequent works included the formulation of capacity as DP for channels where the state is a function of the input \cite{Yang05}, Markov channels \cite{TatikondaMitter_IT09} and power-constrained Gaussian noise channels with memory \cite{YangKavcicTatikondaGaussian}. To apply algorithms from DP, such as value and policy iteration, quantization is required, and therefore, only lower bounds were derived in the above papers.

In \cite{PermuterCuffVanRoyWeissman08} and \cite{Ising_channel}, the feedback-capacities of the trapdoor and Ising channels, respectively, were found by solving their corresponding Bellman equations. The idea is that the feedback capacity is equal to the optimal reward of the DP, and therefore, it suffices to find a solution which satisfies the Bellman equation\cite{Bertsekas05}. Besides reward optimality verification, the Bellman equation also establishes a mechanism for optimal policy verification, which is a significant additional benefit.

The novelty in our work is the derivation of the optimal input distribution from the Bellman equation solution. The optimal solution of the DP is then utilized to understand how the dynamic program evolves under an optimal policy. We show that converting the DP solution into channel coding terms results in a straightforward interpretation of optimal encoding procedure. This encoding procedure led us to an innovative and zero-error coding scheme for our input-constrained setting. This establishes that DP as a tool is good not only for solving optimization problems, but also for deriving optimal coding schemes.

We also consider an input-constrained BEC where the encoder knows ahead of time if there is an erasure in the channel. Clearly, this non-causal setting is superior in terms of capacity compared to the feedback setting. We have managed to show that the capacity of this setting coincides with our feedback capacity expression, and therefore, a priori knowledge of the erasure in the channel does not increase the feedback capacity. Although this finding and the coding scheme for the feedback setting are sufficient for the feedback-capacity derivation, we argue that the capacity-achieving coding scheme is hard to construct without the DP solution.

The remainder of the paper is organized as follows. Section \ref{sec:definition} includes notation and description of the problem. Section \ref{sec:main} states the main results of this paper. In Section \ref{sec:formulation}, we provide a brief review of infinite-horizon DP and present the DP formulation of the feedback capacity. In Section \ref{sec:solution_for_eras}, the DP for the erasure channel is calculated, evaluated numerically and, finally, we prove that the Bellman equation is satisfied. In Section \ref{sec:relation_coding}, we present the derivation of the optimal scheme from the solution of the DP. In Section \ref{sec:upperbound}, we derive the capacity of non-causal input-constrained BEC. Finally, the paper is concluded in Section \ref{sec:conclusions}.

\section{Notation and Problem Definition}\label{sec:definition}
Throughout this paper, random variables will be denoted by upper-case letters, such as $X$, while realizations or specific values will be denoted by lower-case letters, e.g., $x$. Calligraphic letters will denote the alphabets of the random variables, e.g., $\mathcal{X}$.
Let $X^{n}$ denote the $n$-tuple $(X_{1},\dots,X_{n})$. For any scalar $\alpha\in[0,1]$, $\bar{\alpha}$ stands for $\bar{\alpha}=1-\alpha$. Let $H_b(\alpha)$ denote the binary entropy for scalar $\alpha\in[0,1]$, i.e., $H_b(\alpha)=-\alpha\log_2\alpha-\bar{\alpha}\log_2\bar{\alpha}$. Let $H_{ter}(\alpha_1,\alpha_2,\alpha_3)$ denote the ternary entropy for scalars $\alpha_1,\alpha_2,\alpha_3\in[0,1]$ such that $\sum_i\alpha_i = 1$, i.e., $H_{ter}(\alpha_1,\alpha_2,\alpha_3)=\sum_i-\alpha_i\log_2\alpha_i$.

The communication setting of a memoryless channel with feedback is described in Fig. \ref{fig:channel}. A message $M$ is drawn uniformly from the set $\{1,\dots,2^{nR}\}$ and made available to the encoder. The encoder at time $i$ knows the message $m$ and the feedback samples $y^{i-1}$, and produces a binary output, $x_i \in \{0,1\}$, as a function of $m$ and $y^{i-1}$. The sequence of encoder outputs, $x_1x_2x_3\ldots$, must satisfy the $(1,\infty)$-RLL input-constraint of the channel, namely, no two consecutive ones are allowed. The channel is memoryless in the sense that the output at time $i$, given the existing information in the system, depends only on the current input, i.e.,
\begin{align}\label{def_memoryless}
    p(y_i|x^{i},y^{i-1})=p(y_i|x_i),\ \forall i.
\end{align}

We focus on the erasure channel, shown in Fig. \ref{fig:erasure}. The input alphabet is $\mathcal{X}=\{0,1\}$, while the output can take values in $\mathcal{Y}=\{0,1,?\}$. The probability for erasure in the channel is $\epsilon$ and can take any value in $[0,1]$.
\begin{definition} A $(n,2^{nR},(1,\infty))$ \emph{code} for a constrained-input channel with feedback is defined by a set of encoding functions:
    \begin{equation*}
        f_i: \{1,\dots,2^{nR}\}\times \mathcal{Y}^{i-1} \rightarrow \mathcal{X}_i, \ i=1,\dots,n,
    \end{equation*}
    satisfying $f_i(m,y^{i-1})=0 \  \text{if} \ f_{i-1}(m,y^{i-2})=1$ for all $(m,y^{i-1})$, and a decoding function:
    \begin{equation*}
        \Psi: \mathcal{Y}^{n} \rightarrow \{1,\dots,2^{nR}\}.
    \end{equation*}
\end{definition}

In addition, we define the non-causal $(1,\infty)$-RLL BEC. For this setting, all definitions remain the same as in the previous setting, but the encoder knows ahead of time whether there is an erasure in the channel. Formally, define $\theta_i$ as the indicator that corresponds to erasure in the channel at time $i$, namely, $\theta_i=0$ if $x_i=y_i$ and $\theta_i=1$ otherwise. The set of encoding functions for this setup is then defined as:
\begin{equation*}
    f_i: \{1,\dots,2^{nR}\}\times \mathcal{Y}^{i-1}\times\{0,1\} \rightarrow \mathcal{X}_i, \ i=1,\dots,n,
\end{equation*}
satisfying $f_i(m,y^{i-1},\theta_i)=0 \  \text{if} \ f_{i-1}(m,y^{i-2},\theta_{i-1})=1$ for all $(m,y^{i-1},\theta_{i-1},\theta_{i})$.

The \textit{average probability of error} for a code is defined as $P_{e}^{(n)}=\Pr(M\neq\Psi(Y^{n}))$. A rate $R$ is said to be $(1,\infty)$\textit{-achievable} if there exists a sequence of $(n,2^{nR},(1,\infty))$ codes, such that $\lim_{n\rightarrow\infty} P_{e}^{(n)}=0$. The \textit{capacity}, $C^{\text{fb}}_\epsilon$, defined to be the supremum over all $(1,\infty)$-achievable rates, is a function of the erasure probability $\epsilon$. Let $C^{\text{nc}}_\epsilon$ denote the capacity for the non-causal $(1,\infty)$-RLL BEC. From operational considerations of the encoding functions for both settings, it is clear that $C^{\text{nc}}_\epsilon\geq C^{\text{fb}}_\epsilon$.
\section{Main Results}\label{sec:main}
The following is our main result concerning the capacity of the $(1,\infty)$-RLL constrained BEC with feedback.
\begin{theorem}\label{theorem:main}
The capacity of the $(1,\infty)$-RLL input-constrained erasure channel with feedback is
\begin{align}\label{eq:main_capacity}
    C^{\text{fb}}_{\epsilon} &= \max_{0\leq p \leq \frac{1}{2}} \frac{H_b(p)}{p+\frac{1}{1-\epsilon}}.
\end{align}
Furthermore, the capacity is achieved by an explicit zero-error coding scheme that is presented in Section \ref{subsec:coding_scheme}, in Algorithm \ref{alg:encoding} and Algorithm \ref{alg:decoding}.
\end{theorem}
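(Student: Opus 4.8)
The plan is to prove the theorem in two stages: first obtain the single-letter formula \eqref{eq:main_capacity} by recognizing $C^{\text{fb}}_\epsilon$ as the optimal average reward of an infinite-horizon dynamic program and solving the associated Bellman equation, and then promote the achievability half to a constructive, zero-error statement by reading off an encoding rule from the optimal DP policy.

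For the formula, I would start from the standard feedback-capacity characterization $C^{\text{fb}}_\epsilon=\lim_{n\to\infty}\sup\frac1n I(X^n\to Y^n)$, the supremum ranging over input processes $\{p(x_i\mid x^{i-1},y^{i-1})\}$ that honor the $(1,\infty)$-RLL constraint, and cast it (as in Section~\ref{sec:formulation}) as an average-reward dynamic program whose state $z_i=\Pr(X_{i-1}=1\mid y^{i-1})$ is the decoder's posterior on the only quantity that restricts $X_i$, whose action is the conditional law of $X_i$ given the past, and whose per-stage reward is the single-letter term $I(X_i;Y_i\mid y^{i-1})$. Following the Bellman-equation approach of \cite{PermuterCuffVanRoyWeissman08,Ising_channel} rather than value or policy iteration, I would posit a candidate differential value function $h(\cdot)$ --- a suitable combination of binary-entropy terms in the belief $z$ --- together with the scalar $\rho=\max_{0\le p\le 1/2}\frac{H_b(p)}{p+\frac1{1-\epsilon}}$, and verify the fixed-point identity $\rho+h(z)=\sup_{a}\big\{r(z,a)+\mathbb{E}[h(z')\mid z,a]\big\}$ for every reachable $z$. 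By the verification theorem for average-reward dynamic programs \cite{Bertsekas05}, this yields $C^{\text{fb}}_\epsilon=\rho$, and the maximizing action in the verification simultaneously exhibits the optimal stationary policy.

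For the zero-error achievability I would translate that stationary policy into the explicit encoder and decoder of Algorithms~\ref{alg:encoding} and~\ref{alg:decoding}. Under the optimal policy the channel input is a $(1,\infty)$-constrained stream in which, from a ``ground'' configuration, a $1$ is emitted with the optimal probability $p^*$ (and is then necessarily followed by a $0$) and a $0$ otherwise; because the channel is a BEC with feedback, after any erasure the encoder knows exactly what the decoder is still missing and can resynchronize by retransmitting, so the decoder never errs. The rate is then computed by a renewal-reward argument over successive returns to the ground configuration: each ``decision'' symbol must be received to be resolved and hence costs $\tfrac1{1-\epsilon}$ channel uses in expectation, each emitted $1$ forces one extra (value-known, hence cost-one) $0$, and $H_b(p^*)$ bits are conveyed per return, giving rate $\frac{H_b(p^*)}{p^*+\frac1{1-\epsilon}}=\rho$ and matching the converse.

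I expect the crux to be the Bellman verification: one must identify the correct functional form of $h(z)$ and then carry out the inner maximization over the action --- a distribution on $\{0,1\}$ subject to the run-length constraint --- checking that the optimizer is consistent across all belief states so that $\rho$ is genuinely the supremum rather than merely a per-state lower bound. A secondary difficulty is making the passage from the abstract belief-state policy to the concrete scheme rigorous, in particular the bookkeeping of erasure-induced retransmissions so that the input constraint is never violated and the claimed rate is attained as $n\to\infty$.
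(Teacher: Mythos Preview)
Your plan is essentially the paper's: cast $C^{\text{fb}}_\epsilon$ as an average-reward DP on the posterior belief, verify a candidate pair $(\rho,h)$ in the Bellman equation (this is Theorem~\ref{theorem:sol_bellman}, with the piecewise $h^\ast_\epsilon$ of \eqref{eq:func_star}), then extract the stationary policy and analyze the induced scheme by a renewal argument (Lemmas~\ref{lemma:constant_amount}--\ref{lemma:const_amount} and the computation of $\mathbb{E}[N]=\tfrac{1}{1-\epsilon}+p_\epsilon$).

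One caution on the achievability half: your description of the scheme as ``resynchronize by retransmitting'' is not what Algorithms~\ref{alg:encoding}--\ref{alg:decoding} do, and a literal retransmission scheme would not achieve the rate. Under the $(1,\infty)$ constraint you cannot resend an erased $1$; more importantly, in the paper's scheme the bit sent after an erasure may \emph{differ} from the erased bit. What happens instead is that upon each erasure the encoder swaps the labelling of message sub-intervals (exchanging which subset is mapped to $1$), and the point of the construction is that this swap keeps the length of the $1$-labelled region equal to $p_\epsilon$ at every step (Lemma~\ref{lemma:constant_amount}). That invariance---not repetition---is what makes the transmitted bit $\mathrm{Ber}(p_\epsilon)$ at every use and hence makes the renewal accounting $\mathbb{E}[N]=\tfrac{1}{1-\epsilon}+p_\epsilon$ correct. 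If you carry out the analysis with retransmission in mind you will find the constraint forces wasted $0$'s and the rate falls short.
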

In Fig. \ref{fig:capacity}, the feedback capacity is evaluated for different values of erasure probability $\epsilon$. As can be seen, the capacity is a decreasing function for an increasing value of $\epsilon$. For $\epsilon=0$, the capacity is $C^{\text{fb}}_{0}\approx 0.6942$, which can be represented as $\log_2 \phi$, where $\phi$ is the golden ratio and is known as the entropy rate of a binary source with no consecutive ones. For $\epsilon=1$, the capacity value is $C^{\text{fb}}_{1}=0$, as expected.

The capacity of the non-constrained BEC can be expressed as $\max_{0\leq p\leq\frac{1}{2}}\frac{H_{b}(p)}{\frac{1}{1-\epsilon}}=1-\epsilon$. Note that the only difference between this term and our capacity expression in \eqref{eq:main_capacity} is the denominator. This fact hints that the capacity expressions of other input constraints may share a common structure.
\begin{figure}[!t]
\centering
    \psfrag{A}[][][1]{}
    \psfrag{B}[t][][1]{Erasure probability $\epsilon$}
    \psfrag{C}[b][][1]{Capacity $C^{\text{fb}}_{\epsilon}$}
    \includegraphics[scale = 0.28]{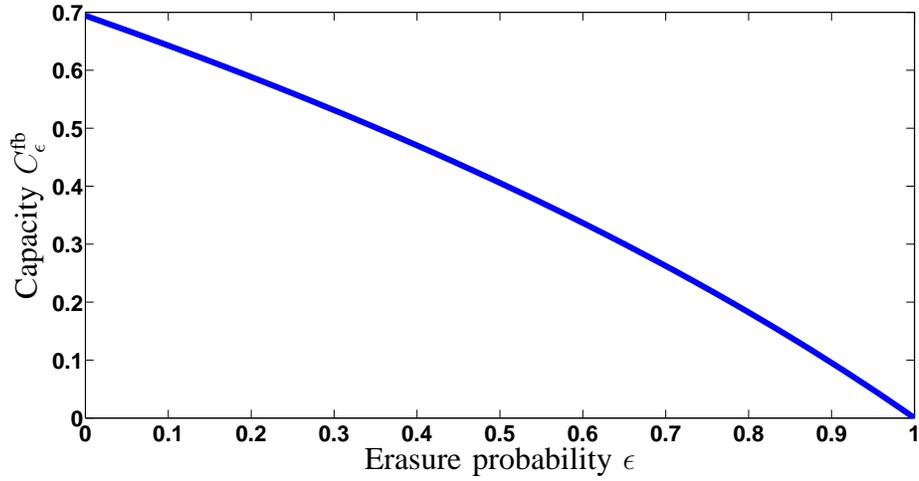}
    \caption{The capacity $C^{\text{fb}}_{\epsilon}$, as a function of $\epsilon$, of the $(1,\infty)$-RLL input-constrained BEC with feedback.}
    \label{fig:capacity}
\end{figure}

The next theorem states that the non-causal $(1,\infty)$-RLL input-constrained BEC has the same capacity as the feedback setting.
\begin{theorem}\label{theorem:non_causal}
Non-causal knowledge of erasures does not increase the feedback capacity, i.e.,
\begin{align*}
C^{\text{nc}}_{\epsilon}&=C^{\text{fb}}_{\epsilon}.
\end{align*}
\end{theorem}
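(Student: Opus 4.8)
The plan is to prove $C^{\text{nc}}_\epsilon=C^{\text{fb}}_\epsilon$ by combining the operational inequality $C^{\text{nc}}_\epsilon\ge C^{\text{fb}}_\epsilon$ (already recorded above, and also witnessed by the explicit scheme of Theorem~\ref{theorem:main}, which is itself a legal non-causal code) with a matching converse $C^{\text{nc}}_\epsilon\le C^{\text{fb}}_\epsilon$. By Theorem~\ref{theorem:main} it suffices to show $C^{\text{nc}}_\epsilon\le\max_{0\le p\le 1/2}\frac{H_b(p)}{p+\frac{1}{1-\epsilon}}$, and I would do this by a direct single-letterization of $I(M;Y^n)$ that uses only two structural facts: $\theta_i$ is $\mathrm{Bernoulli}(\epsilon)$ and independent of $(M,Y^{i-1})$, and in a $(1,\infty)$-constrained input a symbol $1$ forces the next symbol to be $0$ while being the only carrier of information.

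First I would fix a sequence of non-causal codes with $P_e^{(n)}\to 0$ and apply Fano's inequality, so that $nR\le I(M;Y^n)+o(n)=\sum_{i=1}^n I(M;Y_i\mid Y^{i-1})+o(n)$. For each $i$ introduce the virtual noiseless input $X_i^{(0)}=f_i(M,Y^{i-1},0)$ (a deterministic function of $(M,Y^{i-1})$) and set $r_i=\Pr(X_i^{(0)}=1\mid Y^{i-1})$, $\tilde p_i=\Pr(X_i^{(0)}=1)$, with $\tilde p_0:=0$. Because $\theta_i\perp(M,Y^{i-1})$, conditioned on $(M,Y^{i-1})$ the output $Y_i$ equals the fixed value $X_i^{(0)}$ with probability $1-\epsilon$ and $?$ with probability $\epsilon$, so $H(Y_i\mid M,Y^{i-1})=H_b(\epsilon)$ exactly, while $H(Y_i\mid Y^{i-1})=H_b(\epsilon)+(1-\epsilon)\,E[H_b(r_i)]$; hence $I(M;Y_i\mid Y^{i-1})=(1-\epsilon)\,E[H_b(r_i)]$. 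The constraint now enters: $X_i^{(0)}=1$ forces $X_{i-1}=0$, hence $Y_{i-1}\neq 1$, so $r_i=0$ on $\{Y_{i-1}=1\}$ and in fact $\{X_i^{(0)}=1\}\subseteq\{Y_{i-1}\neq 1\}$; consequently $\Pr(Y_{i-1}\neq 1)=1-(1-\epsilon)\tilde p_{i-1}$ and $E[r_i\mathbf{1}\{Y_{i-1}\neq1\}]=\tilde p_i$, and a conditional Jensen step on the concave function $H_b$ gives $I(M;Y_i\mid Y^{i-1})\le(1-\epsilon)\big(1-(1-\epsilon)\tilde p_{i-1}\big)H_b\!\Big(\frac{\tilde p_i}{1-(1-\epsilon)\tilde p_{i-1}}\Big)$. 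The same constraint also yields $\tilde p_i+(1-\epsilon)\tilde p_{i-1}\le 1$, so each $H_b$-argument lies in $[0,1]$.

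Summing over $i$ and applying a second Jensen step, this time across time with weights $\alpha_i=1-(1-\epsilon)\tilde p_{i-1}$, collapses everything to a single letter: using $\sum_i\alpha_i=n-(1-\epsilon)\sum_i\tilde p_{i-1}$ and $\sum_i\tilde p_i=\sum_i\alpha_i\cdot\frac{\tilde p_i}{\alpha_i}$, one obtains, with $\bar p=\frac1n\sum_i\tilde p_i$, the bound $\frac1n\sum_i I(M;Y_i\mid Y^{i-1})\le(1-\epsilon)\big(1-(1-\epsilon)\bar p\big)H_b\!\big(\frac{\bar p}{1-(1-\epsilon)\bar p}\big)+o(1)$, where the summed constraints force $\frac{\bar p}{1-(1-\epsilon)\bar p}\le 1+o(1)$. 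Substituting $s=\frac{\bar p}{1-(1-\epsilon)\bar p}$ rewrites the leading term as $\frac{H_b(s)}{s+\frac{1}{1-\epsilon}}$, and since $\frac{H_b(s)}{s+\frac{1}{1-\epsilon}}$ is strictly decreasing for $s\ge\frac12$, its supremum over $s\in[0,1]$ equals $\max_{0\le s\le 1/2}\frac{H_b(s)}{s+\frac{1}{1-\epsilon}}=C^{\text{fb}}_\epsilon$. Letting $n\to\infty$ gives $R\le C^{\text{fb}}_\epsilon$, hence $C^{\text{nc}}_\epsilon\le C^{\text{fb}}_\epsilon$, and the theorem follows.

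I expect the main obstacle to be the bookkeeping around the virtual input $X_i^{(0)}$: one must verify that the two consequences of the $(1,\infty)$ constraint on which the argument rests --- namely $\{X_i^{(0)}=1\}\subseteq\{Y_{i-1}\neq 1\}$ and $\tilde p_i+(1-\epsilon)\tilde p_{i-1}\le 1$ --- hold for every code, including encoders that behave wastefully on erased slots (for instance sending a $1$ that no one observes, which in fact hurts by blocking the next slot), and to handle the $o(1)$ terms coming from boundary effects and from the infinite slope of $H_b$ near $0$. Once these points are settled the two concavity steps are routine. Should the direct estimate prove delicate, an alternative is to note that the induced average-reward optimization over $(\tilde p_i)$ is exactly the dynamic program analyzed in Section~\ref{sec:solution_for_eras}, whose optimal value is $C^{\text{fb}}_\epsilon$, and to invoke that value directly.
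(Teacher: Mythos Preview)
Your approach is correct and genuinely different from the paper's. The paper first normalizes the encoder via Lemma~\ref{lemma:strict_encoder} (without loss of generality, send $0$ on erased slots), then recasts the non-causal encoder as a ``$(1,\infty,\mathrm{Ber}(\epsilon))$-RLL source'' and bounds its entropy rate $\tfrac{1}{n}\sum_i H(X_i\mid X^{i-1},\theta^i)$ by a stationary Markov-chain computation (Lemma~\ref{lemma:entropy_rate}). You instead run a direct Fano converse on $I(M;Y^n)$: your device of the virtual noiseless input $X_i^{(0)}=f_i(M,Y^{i-1},0)$ plays the same role as the paper's normalization lemma (it makes whatever the encoder does on $\theta_i=1$ irrelevant to $Y_i$), and your two Jensen steps---one conditional on $\{Y_{i-1}\neq 1\}$, one across time with weights $\alpha_i$---replace the paper's stationary-distribution calculation. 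The identities $\Pr(Y_{i-1}=1)=(1-\epsilon)\tilde p_{i-1}$ and $\{X_i^{(0)}=1\}\subseteq\{Y_{i-1}\neq 1\}$ that you use are exactly the manifestations of the $(1,\infty)$ constraint that the paper encodes in its transition matrix $\mathbf{Q}$. What your route buys is that it handles time-varying input laws cleanly via Jensen, whereas the paper's Lemma~\ref{lemma:entropy_rate} parameterizes by a single $p$ and invokes the stationary distribution without spelling out why a constant $p$ suffices; what the paper's route buys is a clean conceptual reduction (non-causal capacity $\le$ entropy rate of a randomly-constrained source) that bypasses the $o(1)$ boundary bookkeeping you flag. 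Both land on the same single-letter expression after the substitution $s=\bar p/(1-(1-\epsilon)\bar p)$, which is precisely the paper's stationary probability relation $x^\ast(0)=1/(1+\bar\epsilon p)$ in disguise.
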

Next, we show the properties of the capacity expression \eqref{eq:main_capacity}.
\begin{lemma}\label{lemma:concave} Define the function $f_{\epsilon}(p)=\frac{H_b(p)}{p+\frac{1}{1-\epsilon}}$, where $p\in[0,1]$. The following properties hold for $f_{\epsilon}(p)$:
\begin{itemize}
\item The function $f_{\epsilon}(p)$ is concave on $[0,1]$, for any $\epsilon\geq 0$.
\item The function $f_{\epsilon}(p)$ has only one maximum in $[0,1]$, which is the only real solution of the equation $p^{\frac{1}{\bar{\epsilon}}}=(1-p)^{1+\frac{1}{\bar{\epsilon}}}$. This maximum lies in $[0,\frac{1}{2}]$.
\item Denote by $p_{\epsilon}$ the argument that achieves the maximum of $f_{\epsilon}(p)$. The capacity can also be expressed by,
\begin{equation*}
C_{\epsilon}^{\text{fb}} = \frac{-\log_2(p_{\epsilon})}{1+\frac{1}{1-\epsilon}}.
\end{equation*}
\end{itemize}
\end{lemma}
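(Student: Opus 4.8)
The plan is to analyze $f_\epsilon(p) = H_b(p)/(p + c)$ where $c = 1/(1-\epsilon) \geq 1$ is a fixed constant, treating the three bullets in order since each builds on the previous one.

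For concavity, I would compute the second derivative directly. Writing $g(p) = p + c$, we have $f_\epsilon' = (H_b' g - H_b)/g^2$, and differentiating again gives a numerator of the form $g^2(H_b'' g) - (H_b' g - H_b)\cdot 2 g$, which after dividing by $g^3$ reduces to $H_b''(p)(p+c) + 2H_b(p)/(p+c) - 2H_b'(p)$; one must then check this is $\le 0$ on $[0,1]$. The key facts are $H_b''(p) = -\frac{1}{\ln 2}\cdot\frac{1}{p(1-p)} < 0$ and that $H_b$ is concave with $H_b(0)=H_b(1)=0$. An efficient route: note $f_\epsilon(p) = H_b(p)/g(p)$ is a ratio of a concave nonnegative function to a positive affine function; I would show concavity by verifying that the numerator $N(p) := H_b''(p)g(p)^2 - 2g(p)\big(H_b'(p)g(p) - H_b(p)\big)$ of $f_\epsilon''\cdot g^3$ (sign-wise) is nonpositive, using $H_b'(p)g(p)-H_b(p) = g(p)\log_2\frac{1-p}{p} - H_b(p)$ and bounding term by term. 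Alternatively — and this may be cleaner — substitute the explicit forms and factor; since $H_b''<0$ dominates near the endpoints and the cross terms are controlled by concavity of $H_b$, the inequality should fall out. I expect this computation to be the main obstacle: it is elementary but the algebra is delicate, and care is needed at the endpoints $p=0,1$ where $H_b'$ blows up (there $f_\epsilon$ extends continuously with $f_\epsilon(0)=0$, $f_\epsilon'(0^+)=+\infty$, which is consistent with concavity).

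For the second bullet, concavity plus $f_\epsilon(0)=0$, $f_\epsilon$ positive on $(0,1)$, $f_\epsilon(1)=0$ immediately gives a unique interior maximizer $p_\epsilon$, characterized by $f_\epsilon'(p_\epsilon)=0$, i.e. $H_b'(p)(p+c) = H_b(p)$. Substituting $H_b'(p) = \log_2\frac{1-p}{p}$ and $H_b(p) = -p\log_2 p - (1-p)\log_2(1-p)$ into this and simplifying: $(p+c)\log_2(1-p) - (p+c)\log_2 p = -p\log_2 p - (1-p)\log_2(1-p)$, which rearranges to $(1+c)\log_2(1-p) = c\,\log_2 p$, i.e. $(1-p)^{1+c} = p^{c}$. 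With $c = 1/\bar\epsilon$ this is exactly $p^{1/\bar\epsilon} = (1-p)^{1+1/\bar\epsilon}$. That $p_\epsilon \le \tfrac12$ follows because at $p=\tfrac12$ we have $p^{1/\bar\epsilon} = (\tfrac12)^{1/\bar\epsilon} > (\tfrac12)^{1+1/\bar\epsilon} = (1-p)^{1+1/\bar\epsilon}$, so the continuous increasing-vs-decreasing crossing of the two sides occurs at some $p \le \tfrac12$; combined with concavity and $f_\epsilon'(0^+)>0$ this pins the maximizer into $[0,\tfrac12]$.

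For the third bullet, I would plug the optimality relation back into $f_\epsilon$. At $p = p_\epsilon$, using $H_b(p_\epsilon) = H_b'(p_\epsilon)(p_\epsilon + c)$,
\[
C^{\text{fb}}_\epsilon = f_\epsilon(p_\epsilon) = \frac{H_b(p_\epsilon)}{p_\epsilon + c} = H_b'(p_\epsilon) = \log_2\frac{1-p_\epsilon}{p_\epsilon}.
\]
Then from the characterizing equation $(1-p_\epsilon)^{1+c} = p_\epsilon^{\,c}$ we get $(1+c)\log_2(1-p_\epsilon) = c\log_2 p_\epsilon$, hence $\log_2(1-p_\epsilon) = \frac{c}{1+c}\log_2 p_\epsilon$ and
\[
\log_2\frac{1-p_\epsilon}{p_\epsilon} = \Big(\frac{c}{1+c} - 1\Big)\log_2 p_\epsilon = \frac{-1}{1+c}\log_2 p_\epsilon = \frac{-\log_2 p_\epsilon}{1 + \frac{1}{1-\epsilon}},
\]
which is the claimed expression. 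The only subtlety here is making sure the endpoint cases ($\epsilon = 0$ gives $c=1$; $\epsilon \to 1$ gives $c\to\infty$, $p_\epsilon \to \tfrac12$, $C\to 0$) are consistent, which is a quick check.
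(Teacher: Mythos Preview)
Your treatment of the second and third bullets is correct and in fact more explicit than the paper's: the paper simply asserts that setting $f_\epsilon'(p)=0$ gives the stated equation and that substituting this relation back into $f_\epsilon$ yields the alternative capacity formula, whereas you carry out both computations. Your argument that the root lies in $[0,\tfrac12]$ via the monotone crossing at $p=\tfrac12$ is slightly different from the paper's (which uses the symmetry $H_b(p)=H_b(1-p)$ together with monotonicity of the denominator), but both are fine.

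The gap is in the first bullet. You correctly write down the sign-determining quantity
\[
N(p)\;=\;H_b''(p)\,(p+c)^2 \;-\; 2H_b'(p)\,(p+c)\;+\;2H_b(p),
\]
but you do not actually show $N(p)\le 0$; ``the inequality should fall out'' is exactly the step that needs work, and your aside that $f_\epsilon$ is a ratio of a nonnegative concave function to a positive affine function is a red herring, since such ratios are not concave in general. The paper's device is clean and worth knowing: working with the natural logarithm (which only rescales), one checks that $N'(p)=0$ has $p=\tfrac12$ as its unique solution in $(0,1)$, so $N$ attains its maximum there; evaluating gives $N(\tfrac12)=-4(\tfrac12+c)^2+2\ln 2$, which is nonpositive precisely when $c\ge \sqrt{(\ln 2)/2}-\tfrac12\approx 0.088$. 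Since $c=1/(1-\epsilon)\ge 1$, concavity follows. Note in particular that the claim is \emph{not} true for arbitrarily small $c>0$, so any argument that does not at some point use $c\ge 1$ (or at least $c$ bounded away from $0$) cannot succeed.
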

The proof of Lemma \ref{lemma:concave} is presented in Appendix \ref{app:lemma concave}.
\section{Feedback Capacity and Dynamic Programming}\label{sec:formulation}
The normalized, directed information was introduced by Massey in \cite{Massey90} as $\frac{1}{n}I(X^n\rightarrow Y^n)= \frac{1}{n}\sum_{i=1}^n I(X^i;Y_i|Y^{i-1})$. Massey showed that the maximum normalized directed information upper bounds the capacity of channels with feedback, and subsequently, it was proved that this expression indeed characterizes the feedback capacity for a broad class of channels \cite{TatikondaMitter_IT09,Kramer03,Kim08_feedback_directed,PermuterWeissmanGoldsmith09,ShraderPermuter09CompoundIT}. Of most relevance to our work is the feedback capacity of the unifilar finite state channel that was characterized in \cite{PermuterCuffVanRoyWeissman08}. The next theorem follows from Theorem $1$ in \cite{PermuterCuffVanRoyWeissman08}, by substituting $S_{t-1}=X_{t-1}$ as the channel state at time $t$.

\begin{theorem}[Theorem 1, \cite{PermuterCuffVanRoyWeissman08}]\label{theorem:cap_as_DP}
The capacity of an $(1,\infty)$-RLL input-constrained memoryless channel with feedback can be written as:
\begin{equation}\label{eq_capacity_as_DP}
  C_{\epsilon}^{\text{fb}} = \sup \liminf_{N\rightarrow \infty} \frac{1}{N} \sum_{t=1}^{N}I(X_t,X_{t-1};Y_t|Y^{t-1}),
\end{equation}
where the supremum is taken with respect to $\{p(x_{t}|x_{t-1},y^{t-1}):p(x_{t}=1|x_{t-1}=1,y^{t-1})=0\}_{t\geq 1}$.
\end{theorem}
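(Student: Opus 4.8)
The plan is to recognize the $(1,\infty)$-RLL input-constrained memoryless channel as a \emph{unifilar} finite-state channel (FSC) with a binary state, and then to read off \eqref{eq_capacity_as_DP} from Theorem~1 of \cite{PermuterCuffVanRoyWeissman08} after setting the state at time $t$ to be $S_t=X_t$.

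First I would verify the unifilar structure for $S_{t-1}=X_{t-1}\in\{0,1\}$. By the memorylessness assumption \eqref{def_memoryless}, $p(y_t\mid x^t,y^{t-1})=p(y_t\mid x_t)=p(y_t\mid x_t,s_{t-1})$, so the output at time $t$ depends on the past only through the current input and state; and the next state is the deterministic function $s_t=x_t$ of $(s_{t-1},x_t,y_t)$, so the update is unifilar and, given a fixed initial state, $S_{t-1}$ is a deterministic function of $X^{t-1}$ (so its posterior can be tracked at the receiver from $Y^{t-1}$ and the known input law). In these terms the ``no two consecutive ones'' constraint is exactly that $X_t=1$ is admissible only when $S_{t-1}=0$, i.e.\ the admissible input laws are precisely $\{p(x_t\mid x_{t-1},y^{t-1}):p(x_t=1\mid x_{t-1}=1,y^{t-1})=0\}_{t\ge1}$, which is the feasible set in the statement. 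I would also record the standard reduction, which uses memorylessness together with the fact that $S_{t-1}$ is a function of $X^{t-1}$: $I(X^t;Y_t\mid Y^{t-1})=I(X_t,S_{t-1};Y_t\mid Y^{t-1})$ for every $t$, hence $\tfrac1N I(X^N\to Y^N)=\tfrac1N\sum_{t=1}^N I(X_t,S_{t-1};Y_t\mid Y^{t-1})$.

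Next I would check that the hypotheses of Theorem~1 of \cite{PermuterCuffVanRoyWeissman08} hold here: the state alphabet $\{0,1\}$ is finite, the channel law is time-invariant, and the admissible transition graph on the states (the edges $0\to0$, $0\to1$, $1\to0$) is strongly connected, so the feedback capacity is independent of the initial state. Granting these, Theorem~1 of \cite{PermuterCuffVanRoyWeissman08} yields $C^{\text{fb}}_\epsilon=\sup\liminf_{N\to\infty}\tfrac1N\sum_{t=1}^N I(X_t,S_{t-1};Y_t\mid Y^{t-1})$, the content borrowed from there being that it suffices to optimize over input laws of the Markov form $p(x_t\mid s_{t-1},y^{t-1})$ rather than over all causally conditioned laws $p(x_t\mid x^{t-1},y^{t-1})$. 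Substituting $S_{t-1}=X_{t-1}$ then gives \eqref{eq_capacity_as_DP}.

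The step I expect to need the most care is that \cite{PermuterCuffVanRoyWeissman08} treats unconstrained unifilar channels, whereas here the input alphabet available at time $t$ depends on the state; one must argue that restricting to admissible input laws is harmless on both sides. In the converse, every $(1,\infty)$-code induces a sequence of admissible input laws, so the directed-information bound maximized over admissible laws alone still upper-bounds all achievable rates; in the achievability, the coding scheme of \cite{PermuterCuffVanRoyWeissman08} driven by an admissible input law produces only RLL-compliant codewords, so nothing is lost. Making this ``transparency of the input constraint'' precise, and confirming that the regularity conditions of Theorem~1 survive the restriction, is the only non-bookkeeping part of the argument.
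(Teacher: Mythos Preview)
Your proposal is correct and takes essentially the same approach as the paper: the paper does not give a separate proof but simply notes that the result follows from Theorem~1 of \cite{PermuterCuffVanRoyWeissman08} by substituting $S_{t-1}=X_{t-1}$ as the channel state. Your write-up is in fact more detailed than the paper's one-line justification, spelling out the unifilar verification and the handling of the input constraint that the paper leaves implicit.
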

Having written the capacity of the input constrained channel with feedback as \eqref{eq_capacity_as_DP}, we proceed to show that calculating the capacity can be formulated as an average-reward DP.
\subsection{Average-Reward Dynamic Programs}\label{subsec:Av_Re_DP}
Each DP is defined by the tuple $(\mathcal{Z},\mathcal{U},\mathcal{W},F,P_Z,P_w,g)$.
We consider a discrete-time dynamic system evolving according to:
\begin{equation}\label{eq_DP}
  z_t=F(z_{t-1},u_t,w_t),\  t=1,2,\dots
\end{equation}
Each state, $z_t$, takes values in a Borel space $\mathcal{Z}$, each action, $u_t$, takes values in a compact subset $\mathcal{U}$ of a Borel space, and each disturbance, $w_t$, takes values in a measurable space $\mathcal{W}$. The initial state, $z_0$, is drawn from the distribution $P_Z$, and the disturbance, $w_t$, is drawn from $P_w(\cdot|z_{t-1},u_t)$.
The history, $h_t=(z_0,w_1,\dots,w_{t-1})$, summarizes all the information available to the controller at time $t$. The controller at time $t$ chooses the action, $u_t$, by a function $\mu_t$ that maps histories to actions, i.e., $u_t = \mu_t(h_t)$. The collection of these functions is called a policy and is denoted as $\pi=\{\mu_1,\mu_2,\dots\}$. Note that given a policy, $\pi$, and the history, $h_t$, one can compute the actions vector, $u^t$, and the states of the system, $z_1,z_2,\dots,z_{t-1}$.

Our objective is to maximize the average reward given a bounded reward function $g: \mathcal{Z}\times \mathcal{U}\rightarrow \mathbb{R}$. The average reward for a given policy $\pi$ is given by:
\begin{equation*}
\rho_{\pi} = \liminf _{N\rightarrow\infty} \frac{1}{N}\mathbb{E}_{\pi}\left[\sum_{t=1}^{N}g(Z_{t-1}, \mu_t(h_t))\right],
\end{equation*}
where the subscript $\pi$ indicates that actions $u_t$ are generated by the policy $\pi$. The optimal average reward is defined as
\begin{equation*}
\rho = \sup_{\pi} \rho_{\pi}.
\end{equation*}

\subsection{Formulation of the feedback capacity as DP}\label{subsec:formulation}
The state of the dynamic programming, $z_{t-1}$, is defined as the conditioned probability vector $\beta_{t-1}(x_{t-1}) = p(x_{t-1}|y^{t-1})$. The action space, $\mathcal{U}$, is the set of stochastic matrices, $p(x_t|x_{t-1})$, satisfying the $(1,\infty)$-RLL constraint. For a given policy and an initial state, the encoder at time $t-1$ can calculate the state, $\beta_{t-1}(x_{t-1})$, since the tuple $y^{t-1}$ is available from the feedback. The disturbance is taken to be the channel output, $w_t=y_{t}$, and the reward gained at time $t-1$ is chosen as $I(Y_{t};X_{t},X_{t-1}|y^{t-1})$. The formulation is summarized in Table \ref{table:formulation}.

\textbf{Existence of System:}
We need to show that for a given policy, $\pi=\{\mu_1,\mu_2,\dots\}$, the state $z_t$ can be calculated from the tuple $(z_{t-1},u_t,y_{t})$.
Consider,
\begin{align}\label{state_DP}
\beta_t(x_t)&=p(x_t|y^{t}) \nonumber\\
&=\sum_{x_{t-1}} p(x_t,x_{t-1}|y^{t}) \nonumber\\
&=\frac{\sum_{x_{t-1}} p(x_t,x_{t-1},y_{t}|y^{t-1})}{p(y_{t}|y^{t-1})} \nonumber\\
&= \frac{\sum_{x_{t-1}} p(x_{t-1}|y^{t-1})p(x_t|x_{t-1},y^{t-1})p(y_{t}|y^{t-1},x_t,x_{t-1})}{\sum_{x_{t},x_{t-1}} p(y_{t},x_{t},x_{t-1}|y^{t-1})} \nonumber\\
&\stackrel{(a)}= \frac{\sum_{x_{t-1}} p(x_{t-1}|y^{t-1})p(x_t|x_{t-1},y^{t-1})p(y_{t}|x_t)}{\sum_{x_{t},x_{t-1}} p(x_{t-1}|y^{t-1})p(x_{t}|x_{t-1},y^{t-1})p(y_{t}|x_{t})} \nonumber\\
&= \frac{\sum_{x_{t-1}} \beta_{t-1}(x_{t-1})u_t(x_t,x_{t-1})p(y_{t}|x_t)}{\sum_{x_{t},x_{t-1}} \beta_{t-1}(x_{t-1})u_t(x_{t},x_{t-1})p(y_{t}|x_{t})},
\end{align}
where $(a)$ follows from the memoryless property \eqref{def_memoryless}.
Therefore, there exists a function $F$, such that $\beta_t(x_t)=F(\beta_{t-1}(x_{t-1}),u_t(x_t,x_{t-1}),w_{t})$.
\begin{table}[ht]
\caption{Formulation of capacity as DP}
\centering
\begin{tabular}{|c|c|}
  \hline
  Input-constrained memoryless channel & Dynamic Programming \\ \hline\hline
  $p(x_{t-1}|y^{t-1})$  & $z_{t-1}$, state at time $t-1$\\ \hline
  Constrained $p(x_t|x_{t-1})$ & $u_t$, action taken at time $t-1$  \\ \hline
  $y_{t}$ & $w_t$, disturbance generated at time $t$ \\ \hline
  Equation \eqref{state_DP} & $z_t=F(z_{t-1},u_t,w_t)$, system equation   \\ \hline
  $I(Y_{t};X_{t},X_{t-1}|y^{t-1})$ & $g(z_{t-1},u_t)$, reward gained at time $t-1$  \\ \hline
\end{tabular}\label{table:formulation}
\end{table}

\textbf{Disturbance:}
Let us show that the disturbance distribution depends on the current state and action only, with no dependence on past information, i.e., $p(w_t|w^{t-1},z^{t-1},u^t) = p(w_t|z_{t-1},u_t)$.
\begin{align*}
p(w_t|w^{t-1},z^{t-1},u^t)&=p(y_{t}|y^{t-1},\beta^{t-1},u^t)\\
&=\sum_{x_{t},x_{t-1}} p(y_{t},x_{t},x_{t-1}|y^{t-1},\beta^{t-1},u^t)\\
&=\sum_{x_{t},x_{t-1}} p(x_{t-1}|y^{t-1},\beta^{t-1},u^t)p(x_{t}|x_{t-1},y^{t-1},\beta^{t-1},u^t)p(y_{t}|x_{t},x_{t-1},\beta^{t-1},u^t,y^{t-1})\\
&\stackrel{(a)}=\sum_{x_{t},x_{t-1}} p(x_{t-1}|\beta_{t-1},u_t)p(x_{t}|x_{t-1},\beta_{t-1},u_t)p(y_{t}|x_{t})\\
&=\sum_{x_{t},x_{t-1}} p(y_{t},x_{t},x_{t-1}|\beta_{t-1},u_t)\\
&= p(y_{t}|\beta_{t-1},u_t)\\
&= p(w_{t}|z_{t-1},u_t),\\
\end{align*}
where $(a)$ follows from the fact that the value of $p(x_{t-1}|y^{t-1},\beta^{t-1},u^t)$ is determined by $\beta_{t-1}$, the fact that $x_t$ depends only on the triplet $(x_{t-1},\beta_{t-1},u_t)$, and finally, the fact that the channel is memoryless.

\textbf{Reward:}
We need to show that the reward, $I(Y_{t};X_{t},X_{t-1}|y^{t-1})$, that is achieved at time $t-1$ is a function of the current state, $\beta_{t-1}(x_{t-1})$, and of the chosen action $u_{t}$. Note that the term of the reward depends on the conditional distribution $p(y_t,x_{t},x_{t-1}|y^{t-1})$ only.

For an initial state $z_0$ and a given policy $\pi=\{\mu_1,\mu_2,\dots\}$, the term $\beta_{t-1}$ is determined by $y^{t-1}$. Let us show that the reward achieved at time $t-1$ depends on the current state, action and the channel characterization,
\begin{align*}
p(y_t,x_{t},x_{t-1}|y^{t-1})&\stackrel{(a)}=p(x_{t-1}|y^{t-1})p(x_{t}|x_{t-1},y^{t-1})p(y_{t}|x_{t})\\
&=\beta_{t-1}(x_{t-1})u_t(x_t,x_{t-1})p(y_{t}|x_{t}),
\end{align*}
where $(a)$ follows from the chain rule and the memoryless property \eqref{def_memoryless}. Recall that the term $p(y_t|x_t)$ is given by the channel characterization, and thus, the reward depends on the state, $\beta_{t-1}$, and the chosen action, $u_t$. Therefore, the reward at time $t-1$ can be written as:
\begin{equation*}
    g(z_{t-1},u_t) = I(Y_{t};X_t,X_{t-1}|\beta_{t-1},u_t).
\end{equation*}
It then follows that the optimal average reward of the DP is:
\begin{align*}
  \rho^{\ast} &=  \sup_{\pi}\liminf _{N\rightarrow\infty} \frac{1}{N}\sum_{t=1}^{N}I_{\pi}(Y_{t};X_t,X_{t-1}|Y^{t-1}),
\end{align*}
where the subscript $\pi$ indicates that the mutual information is calculated with respect to the policy $\pi$. This term is the capacity for an input-constrained memoryless channel with feedback as presented in Theorem \ref{theorem:cap_as_DP}, and we conclude that the optimal average reward is equal to the capacity.

\section{Solution For the Erasure Channel}\label{sec:solution_for_eras}
This section is organized as follows: Section \ref{subsec:form_for_erasure} formulates feedback capacity of the BEC as DP using the notation from Section \ref{subsec:formulation}. In Section \ref{subsec:numer_eval}, we evaluate a numerical solution using the value iteration algorithm, and finally, in Section \ref{subsec:bellman}, we present the Bellman equation and its solution for the BEC. The solution of the Bellman equation concludes the derivation of the feedback capacity expression in Theorem. \ref{theorem:main}.
\subsection{Formulation of the erasure channel as DP}\label{subsec:form_for_erasure}
The state of the DP at time $t-1$, $z_{t-1}$, is the probability vector $[p(x_{t-1}=0|y^{t-1}), p(x_{t-1}=1|y^{t-1})]$. With some abuse of notation, we refer from now on to $z_{t-1}\triangleq p(x_{t-1}=0|y^{t-1})$ as the first component of the vector, which also determines the second component, since they sum to $1$.
Each action, $u_{t}$, is a constrained $2\times2$ stochastic matrix, $p(x_t|x_{t-1})$, of the form:
\begin{equation*}
u_{t}=\left[\begin{array}{cc}
 p(x_t=0|x_{t-1}=0) & p(x_t=1|x_{t-1}=0) \\
1 & 0 \end{array}\right].
\end{equation*}

The disturbance $w_t$ is the channel output, $y_t$, and can take values in $\{0,1,?\}$. With the above definitions and \eqref{state_DP}, the system equation can be expressed as follows:
\begin{equation}\label{eq:system_erasure}
z_{t}=\left\{\begin{array}{cc}
 1 & \text{if } w_t=0, \\
 1 - z_{t-1} + z_{t-1}u_{t}(1,1) & \text{if } w_t=?, \\
0 & \text{if } w_t=1. \end{array}\right.
\end{equation}

At this point, to simplify notations we note that $1 - z_{t-1} + z_{t-1}u_{t}(1,1)$ can be written as $1-z_{t-1}u_{t}(1,2)$ . We denote $\delta_t\triangleq z_{t-1}u_{t}(1,2)$, and this implies the constraint $0\leq\delta_t\leq z_{t-1}$, since $u_t$, by definition, must be a stochastic matrix. Furthermore, when investigating the relation of DP and encoding procedures, $u_t$ has to be recovered from $\delta_t$, given $z_{t-1}$. This calculation is trivial for $z_{t-1}\neq0$, while for $z_{t-1}=0$, we note that $u_t(1,2)$ has no effect on the DP, and therefore, $u_t(1,2)$ can be fixed to zero.

To calculate the reward, the conditional distribution $p(x_t,x_{t-1},y_t|z_{t-1},u_t)$ is described in Table \ref{table:joint_distr}, and it follows that the reward is:
\begin{align*}
g(z_{t-1},u_t) &= I(Y_t;X_t,X_{t-1}|z_{t-1},u_t) \\
&= H(Y_t|z_{t-1},u_t) - H(Y_t|X_t,X_{t-1},z_{t-1},u_t)\\
&\stackrel{(a)}= H_{ter}((1-\delta_t)\bar{\epsilon}, \epsilon , \delta_t\bar{\epsilon}) - H_b(\epsilon)\\
&\stackrel{(b)}= H_{b}(\epsilon)+ \bar{\epsilon}H_b(\delta_t) - H_b(\epsilon)\\
&= \bar{\epsilon}H_b(\delta_t),
\end{align*}
where $(a)$ follows from the marginal distribution $p(y_t|z_{t-1},u_t)$ in Table \ref{table:joint_distr} and the definition of $\delta_t$, while $(b)$ follows from an easily verifiable identity: $H_{ter}(a\bar{b},\bar{a}\bar{b},b)=H_{b}(b) + \bar{b}H_{b}(a)$, for all $a,b\in [0,1]$.
\begin{table}[ht]
\caption{the conditional distribution $p(x_t,x_{t-1},y_t|z_{t-1},u_t)$}
\label{table:joint_distr}
\centering
\begin{tabular}{|c|c||c|c|c|}
  \hline
  $x_t$ & $x_{t-1}$ & $y_t=0$ & $y_t=?$ & $y_t=1$ \\ \hline \hline
  $0$ & $0$ & $z_{t-1}u_t(1,1)\bar{\epsilon}$ & $z_{t-1}u_t(1,1)\epsilon$ & $0$ \\ \hline
  $1$ & $0$ & $0$ & $z_{t-1}u_t(1,2)\epsilon$ & $z_{t-1}u_t(1,2)\bar{\epsilon}$ \\ \hline
  $0$ & $1$ & $(1-z_{t-1})\bar{\epsilon}$ & $(1-z_{t-1})\epsilon$ & $0$ \\ \hline
\end{tabular}
\end{table}

To apply the value iteration in the next subsection, it is convenient to define the operator of the DP:
\begin{align}\label{eq_operator}
    (Th)(z) &= \sup_{u\in\mathcal{U}} g(z,u) + \int P_{W}(dw|z,u)h(F(z,u,w)),
\end{align}
for all functions $h:\mathcal{Z}\rightarrow \mathbb{R}$.

For our case, the operator of the DP takes the form of
\begin{align}\label{eq_operator_erasure}
    (Th_{\epsilon})(z)&= \sup_{0\leq\delta \leq z} \bar{\epsilon}H_b(\delta) + (1-\delta)\bar{\epsilon}h_{\epsilon}(1) + \epsilon h_{\epsilon}(1-\delta) + \delta\bar{\epsilon}h_{\epsilon}(0),
\end{align}
for all $h_{\epsilon}:[0,1]\rightarrow \mathbb{R}$, where the subscript $\epsilon$ indicates that $h_{\epsilon}$ depends on the parameter ${\epsilon}$.
\subsection{Numerical evaluation}\label{subsec:numer_eval}
Now, that we have the DP formulation for our problem, we can apply the value iteration algorithm to estimate the optimal average reward. The value iteration algorithm is simply applying the DP operator from \eqref{eq_operator_erasure} successively, and it has the form $h_k(z)=(Th_{k-1})(z)$ with $h_{0}(z)=0$. The state of the DP and the values in the action matrices are continuous, which cannot be implemented by a finite-precision computer. To this end, a quantization of $5000$ points in the unit interval for both $z_t$ and $\delta_t$ was performed, and the results after 20 iterations are presented in Fig. \ref{fig:numerical} for erasure probability $\epsilon=0.5$.
\begin{figure}[h]
\centering
        \psfrag{A}[b][][1]{Action $\delta_{20}$}
        \psfrag{B}[b][][1]{Value function $h_{20}$}
        \psfrag{C}[t][][1]{State $z$}
        \psfrag{D}[b][][1]{$\delta_{20}$}
        \psfrag{E}[b][][1]{$h_{20}$}
        \centerline{\includegraphics[scale=0.4]{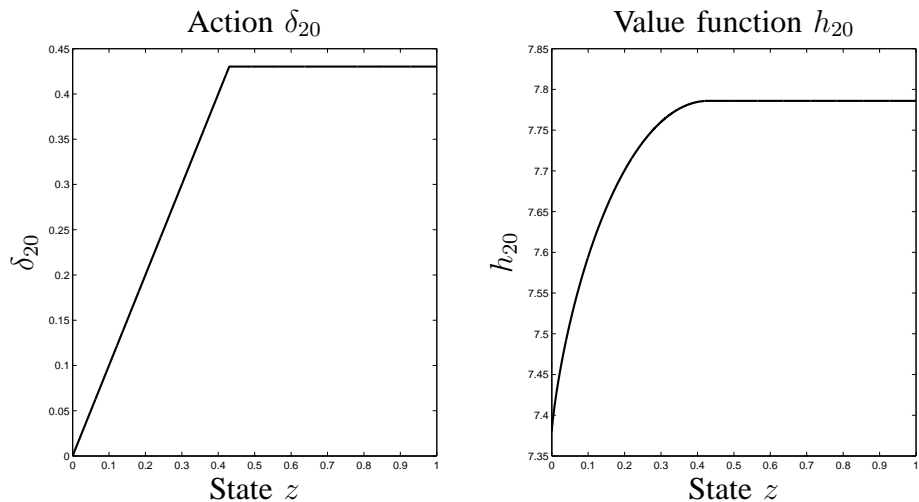}}
\caption{Value iteration evaluation for the erasure channel with $\epsilon=0.5$. The algorithm was implemented with 20 iterations and quantization of 5000 points for both action and state.}
\label{fig:numerical}
\end{figure}

We also simulated the system with the estimated optimal action $\delta_{20}$. The initial state, $z_0$, was chosen to be zero and the action was taken according to $\delta_{20}$ which led to a gained reward. The disturbance was generated randomly according to the induced distribution from Table \ref{table:joint_distr}. Having in hand the current state, action and disturbance, the new state was calculated and the process was repeated $10^6$ times. This simulation led to an approximate average reward of $0.4056$ and the histogram of the states is shown in Fig. \ref{fig:hist}. The significant importance of a discrete histogram will be discussed in Section \ref{sec:relation_coding}, where it is explained how the DP simulation leads us to derive an optimal coding scheme for our channel setting.
\begin{figure}[h]
\centering
        \psfrag{A}[b][][0.8]{Histogram of the state $z$}
        \centerline{\includegraphics[height=6cm]{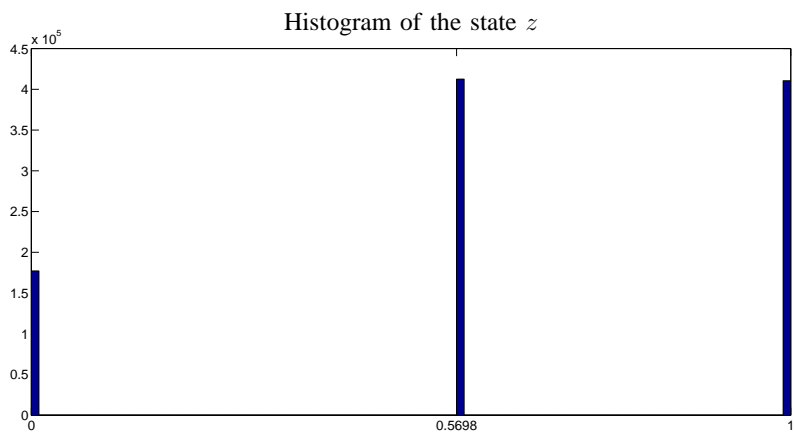}}
\caption{Histogram of system states after $10^6$ runs.}
\label{fig:hist}
\end{figure}
\subsection{The Bellman Equation}\label{subsec:bellman}
In dynamic programming, the Bellman equation suggests a sufficient condition for average reward optimality. This equation establishes a mechanism for verifying that a given average reward is optimal. The next result encapsulates the Bellman equation and can be found in \cite{Arapos93_average_cose_survey}.
\begin{theorem}[Theorem 6.1, \cite{Arapos93_average_cose_survey}]\label{theorem:bellman}
If $\rho\in\mathbb{R}$ and a bounded function $h:\mathcal{Z}\rightarrow \mathbb{R}$ satisfies for all $z \in \mathcal{Z}$:
\begin{equation}\label{eq_bellman}
    \rho + h(z) = \sup_{u\in\mathcal{U}} g(z,u) + \int P_{W}(dw|z,u)h(F(z,u,w)),
\end{equation}
then $\rho=\rho^{\ast}$. Furthermore, if there is a function $\mu : \mathcal{Z} \rightarrow \mathcal{U}$ such that $\mu(z)$ attains the supremum for each $z$, then $\rho_\pi = \rho^{\ast}$ for $\pi=\{\mu_0, \mu_1,\dots\}$ with $\mu_t(h_t) = \mu(z_{t-1})$ for each $t$.
\end{theorem}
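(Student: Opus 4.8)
The plan is to prove the two assertions by the classical telescoping-along-a-trajectory argument for average-reward dynamic programs. The first assertion ($\rho=\rho^{\ast}$) splits into $\rho\ge\rho^{\ast}$ and $\rho\le\rho^{\ast}$; the optimality of the stationary policy $\pi=\{\mu,\mu,\dots\}$ falls out of the equality case of the same chain of inequalities.

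For $\rho\ge\rho^{\ast}$, I would fix an arbitrary policy $\pi$ and initial distribution $P_Z$. Since $(\rho,h)$ solves \eqref{eq_bellman}, for every $z\in\mathcal{Z}$ and every $u\in\mathcal{U}$ the pointwise inequality $\rho+h(z)\ge g(z,u)+\int P_{W}(dw\mid z,u)\,h(F(z,u,w))$ holds. Instantiate it at $z=Z_{t-1}$, $u=U_t=\mu_t(h_t)$, and take conditional expectation given the history $h_t$: because under the policy $Z_{t-1}$ and $U_t$ are determined by $h_t$, with $W_t\sim P_{W}(\cdot\mid Z_{t-1},U_t)$ and $Z_t=F(Z_{t-1},U_t,W_t)$ (the \emph{Existence of System} and \emph{Disturbance} paragraphs), the integral term equals $\mathbb{E}_\pi[h(Z_t)\mid h_t]$. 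Taking full expectations, summing over $t=1,\dots,N$, and telescoping the $h$-terms gives
\[
N\rho+\mathbb{E}_\pi[h(Z_0)]-\mathbb{E}_\pi[h(Z_N)]\ \ge\ \sum_{t=1}^{N}\mathbb{E}_\pi\!\left[g(Z_{t-1},U_t)\right].
\]
Dividing by $N$, using boundedness of $h$ to make the boundary term $O(1/N)$, and taking $\liminf_{N\to\infty}$ yields $\rho\ge\rho_\pi$; since $\pi$ was arbitrary, $\rho\ge\rho^{\ast}$.

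For $\rho\le\rho^{\ast}$ and the stationary-policy claim, suppose first that a measurable selector $\mu$ attaining the supremum in \eqref{eq_bellman} exists, and run $\pi=\{\mu,\mu,\dots\}$. Then every inequality above becomes an equality, so after dividing by $N$ and letting $N\to\infty$ the Cesàro average of the rewards converges to $\rho$; hence $\rho_\pi=\rho$, and therefore $\rho^{\ast}\ge\rho_\pi=\rho$. Combined with the previous step this gives $\rho=\rho^{\ast}=\rho_\pi$, which is precisely the ``Furthermore'' statement. If no exact maximizer exists, I would instead, for each $\eta>0$, pick a measurable $\eta$-optimal selector $\mu_\eta$ (attaining the supremum within $\eta$ at every $z$); the same telescoping carried out with $\eta$-slack gives $\rho_{\pi_\eta}\ge\rho-\eta$, hence $\rho^{\ast}\ge\rho-\eta$ for every $\eta>0$, i.e. $\rho^{\ast}\ge\rho$, and again $\rho=\rho^{\ast}$.

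The main obstacle is the measurable-selection step on a general Borel state space: one must know that an ($\eta$-)optimal action map $z\mapsto\mu(z)$ can be chosen measurable, so that $\pi$ is an admissible policy and all the expectations above are well-defined. This is where compactness of $\mathcal{U}$, boundedness of $g$, and suitable regularity (measurability/semicontinuity) of $g$, $F$ and $P_{W}$ enter, via a measurable maximum theorem; in our application it causes no trouble, since the supremum in \eqref{eq_operator_erasure} is of a continuous function over the compact interval $[0,z]$ and is attained. A minor secondary point is justifying the interchange of $\liminf$ with expectation and handling the $\liminf$ (rather than $\lim$) in the definition of $\rho_\pi$, which is harmless because the boundary term is uniformly $O(1/N)$.
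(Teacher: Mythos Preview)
The paper does not give a proof of this theorem; it is quoted verbatim as Theorem~6.1 of \cite{Arapos93_average_cose_survey} and used as a black box. Your telescoping argument is the standard proof and is correct, including the identification of measurable selection as the only genuine technical issue and the observation that in the concrete DP of \eqref{eq_operator_erasure} this issue is moot.
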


For our DP, substituting \eqref{eq_operator_erasure} into \eqref{eq_bellman} yields the next Bellman equation:
\begin{align}\label{eq_our_bellman}
h_{\epsilon}(z) + \rho_{\epsilon} &= \sup_{0\leq\delta \leq z} \bar{\epsilon}H_{b}(\delta) + \bar{\epsilon}(1-\delta)h_{\epsilon}(1) + \epsilon h_{\epsilon}(1-\delta) + \bar{\epsilon}\delta h_{\epsilon}(0),
\end{align}
for all functions $h_{\epsilon}$.
Let us denote two constants $\rho_{\epsilon}^{\ast}$ and $p_{\epsilon}$,
\begin{align}\label{eq:rho_z_defintion}
   \rho_{\epsilon}^{\ast} &= \max_{0\leq p \leq \frac{1}{2}} \frac{H_b(p)}{p+\frac{1}{1-\epsilon}}, \nonumber\\
    p_\epsilon &= \argmax_{0\leq p \leq \frac{1}{2}} \frac{H_b(p)}{p+\frac{1}{1-\epsilon}},
\end{align}
and a bounded function,
\begin{align}\label{eq:func_star}
 h_{\epsilon}^{\ast}(z) =
 \begin{cases} \bar{\epsilon}H_b(z) - z\bar{\epsilon}\frac{H_b(p_\epsilon)}{p_\epsilon+\frac{1}{1-\epsilon}} &\mbox{if } 0 \leq z \leq p_{\epsilon} \\
        \frac{H_b(p_\epsilon)}{p_\epsilon+\frac{1}{1-\epsilon}} & \mbox{if } p_{\epsilon} \leq z \leq 1. \end{cases}
\end{align}
We proceed to show the DP solution by solving \eqref{eq_our_bellman}.
\begin{theorem}\label{theorem:sol_bellman}
The constant $\rho_{\epsilon}^{\ast}$ and the function $h_\epsilon^{\ast}(z)$ given in \eqref{eq:rho_z_defintion} and \eqref{eq:func_star}, respectively, satisfy the Bellman equation \eqref{eq_our_bellman} for each $\epsilon$.
Therefore, $\rho_{\epsilon}^{\ast}$ is the optimal average reward.
\end{theorem}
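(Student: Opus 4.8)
The plan is to substitute the candidate pair $(\rho_\epsilon^\ast, h_\epsilon^\ast)$ directly into the right-hand side of \eqref{eq_our_bellman} and verify the identity by computing the supremum over $\delta$ explicitly. First I would record the two boundary values $h_\epsilon^\ast(0)=0$ and $h_\epsilon^\ast(1)=\rho_\epsilon^\ast$ (the latter because $1\ge p_\epsilon$), so that the bracketed expression in \eqref{eq_our_bellman} collapses to
\[
G_z(\delta):=\bar\epsilon H_b(\delta)+\bar\epsilon(1-\delta)\rho_\epsilon^\ast+\epsilon\, h_\epsilon^\ast(1-\delta),\qquad 0\le\delta\le z .
\]
I would also extract from Lemma~\ref{lemma:concave} (or directly from first-order stationarity of $f_\epsilon$ at $p_\epsilon$) the two identities that drive the whole computation: $H_b'(p_\epsilon)=\log_2\frac{1-p_\epsilon}{p_\epsilon}=\rho_\epsilon^\ast$, and $H_b(p_\epsilon)-p_\epsilon\rho_\epsilon^\ast=\rho_\epsilon^\ast/\bar\epsilon$ (the latter being a rearrangement of $\rho_\epsilon^\ast = H_b(p_\epsilon)/(p_\epsilon+\tfrac1{\bar\epsilon})$).

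Next I would split the optimization according to how $z$, $p_\epsilon$ and $1-p_\epsilon$ compare, using $p_\epsilon\le\tfrac12\le 1-p_\epsilon$. On the range of $\delta$ with $1-\delta\ge p_\epsilon$, i.e. $\delta\le 1-p_\epsilon$, one has $h_\epsilon^\ast(1-\delta)=\rho_\epsilon^\ast$ and hence $G_z(\delta)=\rho_\epsilon^\ast+\bar\epsilon\bigl(H_b(\delta)-\delta\rho_\epsilon^\ast\bigr)$; since $H_b(\delta)-\delta\rho_\epsilon^\ast$ is concave with stationary point exactly at $\delta=p_\epsilon$, its maximum over $[0,z]\cap[0,1-p_\epsilon]$ is attained at $\delta=\min(z,p_\epsilon)$. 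For $z\le p_\epsilon$ this yields $\rho_\epsilon^\ast+\bar\epsilon(H_b(z)-z\rho_\epsilon^\ast)=\rho_\epsilon^\ast+h_\epsilon^\ast(z)$, matching the first branch of \eqref{eq:func_star}; for $z\ge p_\epsilon$ it yields $\rho_\epsilon^\ast+\bar\epsilon(H_b(p_\epsilon)-p_\epsilon\rho_\epsilon^\ast)=\rho_\epsilon^\ast+\rho_\epsilon^\ast=2\rho_\epsilon^\ast=\rho_\epsilon^\ast+h_\epsilon^\ast(z)$, where the crucial middle step is the second identity above.

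It remains to dispose of the range $\delta\in(1-p_\epsilon,z]$, which occurs only when $z>1-p_\epsilon$; there $h_\epsilon^\ast(1-\delta)=\bar\epsilon H_b(1-\delta)-(1-\delta)\bar\epsilon\rho_\epsilon^\ast$, and I would show $G_z$ is strictly decreasing on $(1-p_\epsilon,1)$ by signing its derivative, $G_z'(\delta)=\bar\epsilon\bigl(H_b'(\delta)-\rho_\epsilon^\ast\bigr)-\epsilon\bar\epsilon\bigl(H_b'(1-\delta)-\rho_\epsilon^\ast\bigr)$, which is negative because $\delta>p_\epsilon$ forces $H_b'(\delta)<\rho_\epsilon^\ast$ while $1-\delta<p_\epsilon$ forces $H_b'(1-\delta)>\rho_\epsilon^\ast$. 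Hence on this last piece $G_z$ is bounded by its value at the left endpoint $\delta=1-p_\epsilon$, which by continuity of $h_\epsilon^\ast$ equals $2\rho_\epsilon^\ast$ — already attained at $\delta=p_\epsilon$ — so it contributes no new supremum. Combining the regimes gives $\sup_{0\le\delta\le z}G_z(\delta)=h_\epsilon^\ast(z)+\rho_\epsilon^\ast$ for every $z\in[0,1]$, and Theorem~\ref{theorem:bellman} then identifies $\rho_\epsilon^\ast$ as the optimal average reward. The main obstacle is purely the bookkeeping: getting the regime boundaries right and propagating the stationarity identity $H_b'(p_\epsilon)=\rho_\epsilon^\ast$ cleanly through each case, together with separately handling the degenerate situation $\epsilon=1$ (where the reward vanishes and $h_\epsilon^\ast\equiv 0$ solves \eqref{eq_our_bellman} trivially) and, should $p_\epsilon=0$ arise, checking the endpoint directly.
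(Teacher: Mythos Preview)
Your proposal is correct and follows essentially the same route as the paper: both substitute $h_\epsilon^\ast(0)=0$, $h_\epsilon^\ast(1)=\rho_\epsilon^\ast$, split into the regimes $z<p_\epsilon$, $p_\epsilon\le z<1-p_\epsilon$, $z\ge 1-p_\epsilon$, and rely on the stationarity identity $H_b'(p_\epsilon)=\rho_\epsilon^\ast$ (which the paper packages as Lemma~\ref{lemma:same_max}) to locate the maximizer at $\delta=\min(z,p_\epsilon)$. The one minor difference is in ruling out $\delta>1-p_\epsilon$: the paper uses the crude bound $\sup(A+B)\le\sup A+\sup B$ and then checks $2\bar\epsilon p_\epsilon+1+\epsilon\le 2$, whereas your derivative-sign argument showing $G_z$ is decreasing on $(1-p_\epsilon,1)$ is a bit cleaner and more transparent.
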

As $\rho_{\epsilon}^{\ast}$ is equal to the capacity expression \eqref{eq:main_capacity}, Theorem \ref{theorem:sol_bellman} concludes the proof for the first part of Theorem \ref{theorem:main}. The proof of Theorem \ref{theorem:sol_bellman} is presented in Appendix \ref{app:proof_bellman}.
\section{Derivation of the capacity-achieving coding scheme from the DP solution}\label{sec:relation_coding}
In this section, we derive the optimal coding scheme using the DP solution and finally show that this leads to a capacity-achieving coding scheme. The method comprises recovering the optimal constrained input distributions $\{p(x_t|x_{t-1},y^{t-1})\}_{t\geq 1}$ from the solution of the DP.
\subsection{Relation of the Coding Scheme to Dynamic Programming Results}\label{subsec:relation}

\begin{figure}[h!]
\centering
        \psfrag{A}[][][1.2]{$z=0$}
        \psfrag{B}[][][1.2]{$z=1$}
        \psfrag{C}[][][1]{$z=1-p$}
        \psfrag{D}[][][0.9]{$\delta^{\ast}=0$}
        \psfrag{E}[][][0.9]{$\delta^{\ast}=p$}
        \psfrag{F}[][][0.9]{$\delta^{\ast}=p$}
        \psfrag{G}[][][1]{$w=0 / ?$}
        \psfrag{H}[][][1]{$w=0$}
        \psfrag{I}[][][1]{$w=1$}
        \psfrag{J}[][][1]{$w=?$}
        \centerline{\includegraphics[height=9.5cm]{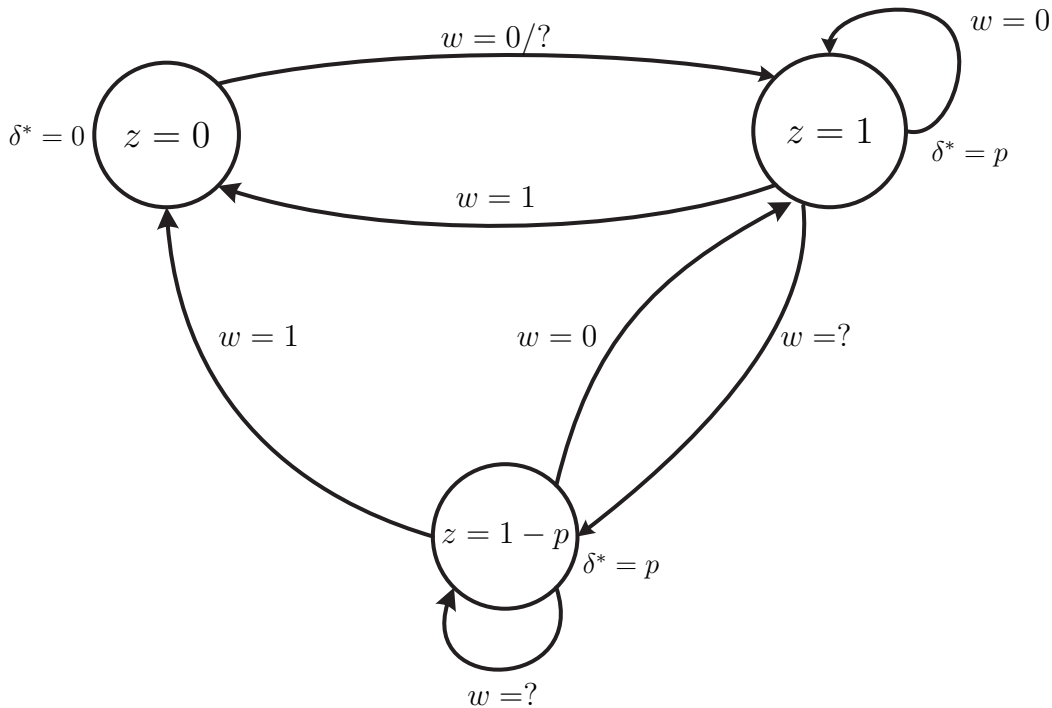}}
\caption{State diagram of the DP for the input-constrained BEC under an optimal policy.}
\label{fig:DP_scheme}
\end{figure}

The histogram for $\epsilon=0.5$, in Fig. \ref{fig:hist}, shows that under an optimal policy, $\delta^{\ast}$, the system evolves between three steady states. Moreover, the solution of the Bellman equation indicates that there exists an optimal stationary policy, and therefore, we look at the stationary phase of the DP. The states, $z$, take values in the finite set $\{0,1-p,1\}$, with $p\triangleq p_{\epsilon}$ (Eq. \eqref{eq:rho_z_defintion}); the subscript $\epsilon$ is omitted for convenience, but all details are discussed for a fixed $\epsilon\in[0,1]$ and its corresponding $p_{\epsilon}$. For each state, the optimal policy, $\delta^{\ast}$, is known from the Bellman equation and arrows can be drawn between the states as a function of the disturbance. The state diagram for our DP is presented in Fig. \ref{fig:DP_scheme}.

Converting the state diagram in Fig. \ref{fig:DP_scheme} into channel coding terms, using the formulation described in Table \ref{table:formulation}, results in an encoding procedure as described in Fig. \ref{fig:encoding_procedure}. Specifically, the states, $p(x_{t-1}=0|y^{t-1})$, take values from $\{0,1-p,1\}$. Each state has its corresponding action, $p(x_{t}=1|x_{t-1}=0)$, and the encoding procedure evolves as a function of the output $y_t$. Recall that $p(x_{t}=0|x_{t-1}=1)=1$, and therefore, the action $p(x_{t}=1|x_{t-1}=0)$ is sufficient to determine the transfer matrix between $X_{t-1}$ and $X_t$.
\begin{figure}[h!]
\centering
        \psfrag{A}[b][][.8]{$\mathbf{p(x_{t-1}=0|y^{t-1})}$}
        \psfrag{B}[t][][.8]{$\mathbf{=0}$}
        \psfrag{C}[b][][.8]{$\mathbf{p(x_{t-1}=0|y^{t-1})}$}
        \psfrag{D}[t][][.8]{$\mathbf{=1}$}
        \psfrag{E}[b][][.8]{$\mathbf{p(x_{t-1}=0|y^{t-1})}$}
        \psfrag{F}[t][][.8]{$\mathbf{=1-p}$}
        \psfrag{K}[b][][1]{$p(x_{t}=1|x_{t-1}=0)=0$}
        \psfrag{L}[b][][1]{$p(x_{t}=1|x_{t-1}=0)=p$}
        \psfrag{M}[r][][1]{$p(x_{t}=1|x_{t-1}=0)=\frac{p}{1-p}$}
        \psfrag{G}[][][1]{$y_t=0 / ?$}
        \psfrag{H}[][][1]{$y_t=0$}
        \psfrag{I}[][][1]{$y_t=1$}
        \psfrag{J}[][][1]{$y_t=?$}
        \psfrag{Z}[][][.7]{}
        \psfrag{X}[][][.7]{}
        \centerline{\includegraphics[height=10.5cm]{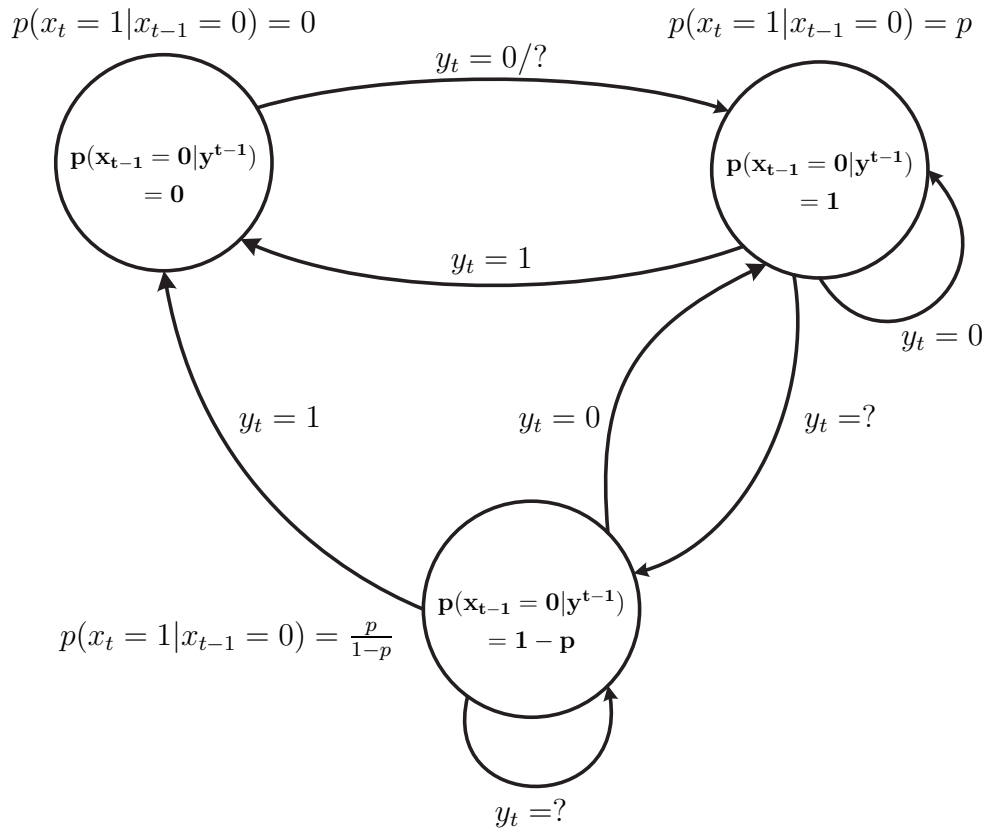}}
\caption{Optimal encoding procedure for the input-constrained BEC. This encoding procedure was achieved from Fig. \ref{fig:DP_scheme} by converting states, actions and disturbances into their corresponding channel coding terms.}
\label{fig:encoding_procedure}
\end{figure}

Let us explain how the encoding procedure evolves. We refer to the state $p(x_{t-1}=0|y^{t-1})=1$ as the \emph{ground state}, since this indicates that $'0'$ was received at the decoder and, therefore, the encoder is allowed to transmit any input to the channel. For the ground state, the next transmitted bit is distributed according to $\text{Ber}(p)$ and it is shown to be the optimal action.

Upon receiving $y_t=0$ at the decoder, the system remains at the ground state and the encoding procedure starts over again. When the output is $y_t=1$, the system moves to the state $p(x_{t-1}=0|y^{t-1})=0$. At this state, since the last input was necessarily $'1'$, the encoder is forced to transmit $'0'$. Therefore, the decoder knows that $'0'$ is the only legitimate input, and the system returns to the ground state regardless of whether the input was erased or not.

The remaining scenario to examine begins at the ground state and is followed by $y_t=?$. The optimal action at the lower state, $p(x_{t-1}=0|y^{t-1})=1-p$, suggests that if $'0'$ is erased, the new transmitted bit should be distributed according to $\text{Ber}(\frac{p}{1-p})$. The term $\frac{p}{1-p}$ is in the unit interval, since $p\leq \frac{1}{2}$. Additionally, the input constraint implies that if $'1'$ was erased then $'0'$ should be transmitted. Upon consecutive erasures, the encoder continues to transmit bits according to this policy. When an output is not an erasure, the system returns to the ground state, and this might take one or two time instances, depending on whether the (unerased) output bit is $'0'$ or $'1'$.

The main challenge is to understand how this encoding procedure can be interpreted as transmitting a message by the encoder. Let the messages be points in the unit interval, i.e., messages take values in the set $\mathcal{M}\triangleq\{\frac{k}{2^{nR}}\}_{k=0}^{2^{nR}-1}$. At each time instance, the unit interval contains sub-intervals with labels that can be $'0'$ or $'1'$, and the input to the channel is simply the label of the sub-interval containing the message. Such an association of messages into a specified interval has been done before in \cite{horstein_original,Kailath_scheme_one,Kim06_MA,shayevitz_posterior_mathcing}.

The partition into sub-intervals will be according to parameters $p$ and $q\triangleq\frac{p}{1-p}$, as described in Fig. \ref{fig:encoding_procedure}. When performing a partition at the ground state, the lower interval is labelled $'0'$ while the upper interval is labelled $'1'$. Before providing the precise encoding algorithm, it will be convenient to understand the labelling process in the example described in Fig. \ref{fig:encoding_example}.
\begin{figure}[t]
\centering
        \psfrag{A}[t][][.9]{$\bar{p}$}
        \psfrag{B}[t][][.9]{$\bar{p}\bar{q}$}
        \psfrag{C}[t][][.9]{$\bar{p}\bar{q}\bar{q}$}
        \psfrag{D}[t][][0.9]{$1$}
        \psfrag{Z}[t][][0.9]{$0$}
        \psfrag{E}[t][][0.9]{$\bar{p} + p\bar{q}$}
        \psfrag{L}[l][][0.6]{Messages}
        \psfrag{G}[l][][0.6]{Partition}
        \psfrag{F}[l][][.6]{Transmitted message}
        \psfrag{H}[r][][.7]{$y_1=?$}
        \psfrag{I}[r][][.7]{$y_2=?$}
        \psfrag{J}[l][][.7]{$y_2=0/1$}
        \psfrag{K}[l][][.7]{$y_3=0/1$}
        \psfrag{x}[l][][0.7]{\underline{\textbf{Legend:}}}
        \centerline{\includegraphics[height=8cm]{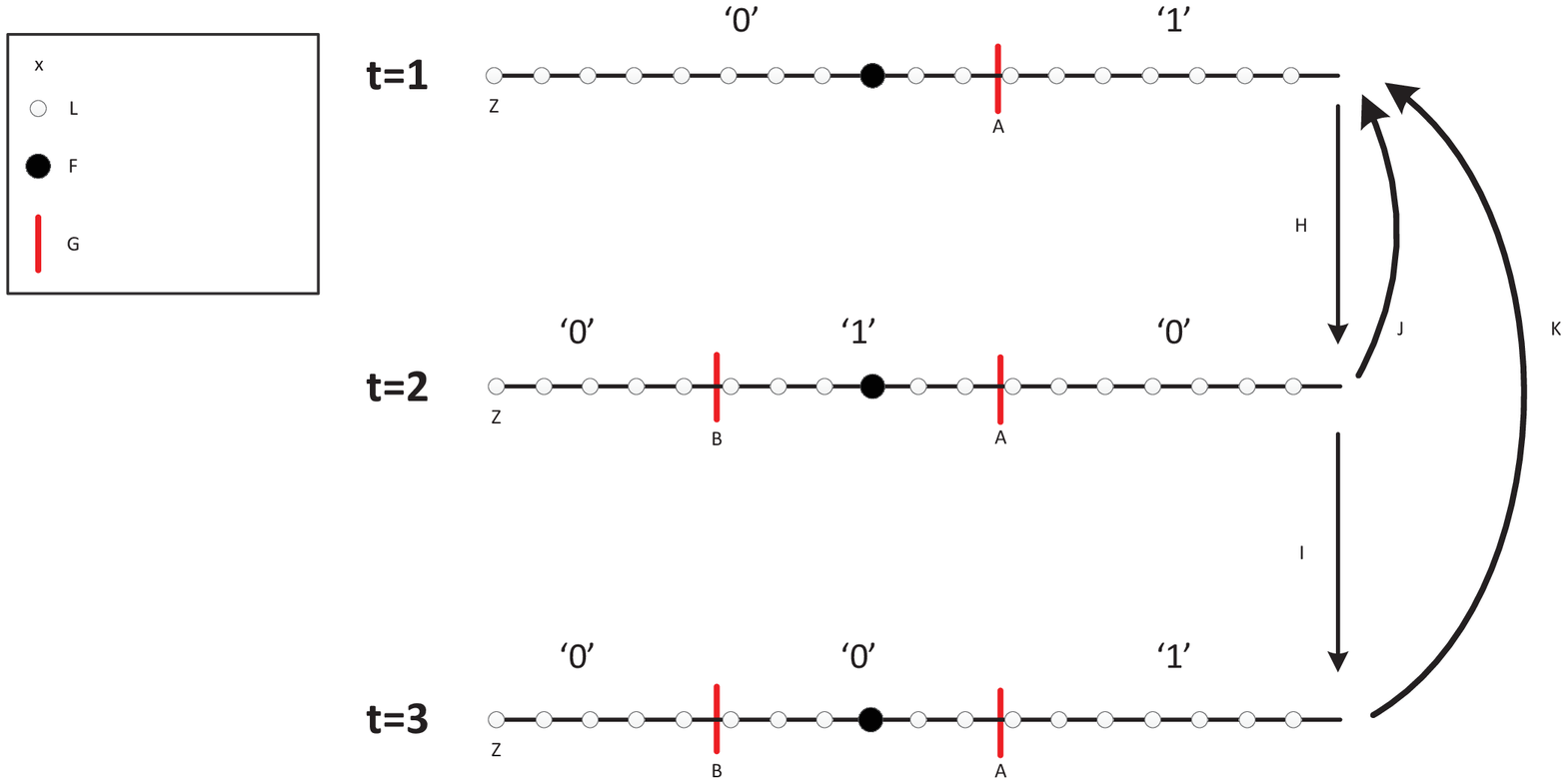}}
\caption{Example for transmitting the black-dot message using the encoding procedure in Fig. \ref{fig:encoding_procedure} for $3$ time instances. The initial partition at the ground state is according to $p$, and the encoder transmits $'0'$ since the black-dot message falls within $[0,\bar{p})$. Upon a successful transmission, the encoder moves back to the ground state and a new procedure begins. In case of erasure, we move to $t=2$, and the interval that was labelled $'0'$ is partitioned according to $q=\frac{p}{1-p}$. The input constraint is preserved since the interval $[\bar{p},1)$, that was labelled $'1'$, is now flipped to $'0'$. The encoder transmits $'1'$ since the message falls within $[\bar{p}\bar{q},\bar{p})$. In case of another erasure, a partition of $q$ should be performed for the intervals that are labelled $'0'$. These intervals are $[0,\bar{p}\bar{q})$ and $[\bar{p},1)$, which are sum up to $1-p$. Since $q=\frac{p}{1-p}$, we simply change the label of $[\bar{p},1)$ (which has length of $p$) to $'1'$, and the label of $[0,\bar{p}\bar{q})$ remains $'0'$. The input-constraint is preserved since $[\bar{p}\bar{q},\bar{p})$ is re-labelled as $'0'$. Upon another erasures, the labelling will be exchanged between the ones presented in $t=2$ and $t=3$ until a successful transmission. Note that the labelling at $t=1$ and $t=3$ are essentially the same.}
\label{fig:encoding_example}
\vspace{-6mm}
\end{figure}

As can be seen in Fig. \ref{fig:encoding_example}, all the proposed partitions in Fig. \ref{fig:encoding_procedure} can be encapsulated into two possible labellings. We denote the labelling at $t=1$ as $L_1$, and the labelling at $t=2$ as $L_2$. The initial labelling at the ground state is chosen as $L_1$, and upon erasure, the current labelling will be replaced with the other labelling. Note that changing the labelling $L_i$ with $L_j$ for $i\neq j$ preserves the input constraint and can be done simply by exchanging the labels of $[\bar{p}\bar{q},\bar{p})$ and $[\bar{p},1)$, while the label of $[0,\bar{p}\bar{q})$ remains $'0'$.

To summarize at this point, at each time instant, we have two possible labellings (which depend on the value of $\epsilon$) of the unit interval which define uniquely the mapping from messages to the channel input. The current labelling is determined only by the output tuple, $y^{t-1}$, and therefore, the decoder and encoder both agree on the latter.
\subsection{Capacity-achieving Coding Scheme}\label{subsec:coding_scheme}
At time instance $t-1$, the \textit{set of possible messages} is defined as $\mathcal{M}_{t-1}=\{m\in\mathcal{M}:p(m|y^{t-1})>0\}$, with $\mathcal{M}_{0}=\mathcal{M}$. The conditional distribution $p(m|y^{t-1})$ is calculated using Bayes' rule, using the fact that the encoding procedure and both labellings are revealed to all parties before transmission begins. Note that the set of possible messages can also be calculated at the encoder, since the output tuple, $y^{t-1}$, is available from the feedback.

Any received symbol at the decoder might reduce the set of potential messages, and a \textit{successful transmission} is defined as a transmission where the size of the set of possible messages is changed, namely, $|\mathcal{M}_{t}|<|\mathcal{M}_{t-1}|$. Specifically, a successful transmission can occur in one of two scenarios; the first is $y_{t}=1$, and the second is where $y_{t}=0$ and $y_{t-1}\neq 1$. Upon a successful transmission, the set of possible messages is calculated and expanded uniformly to the unit interval. To be precise, the messages in the set $\mathcal{M}_t$ take values in $\{\frac{k}{|\mathcal{M}_t|}\}^{|\mathcal{M}_t|-1}_{k=0}$. This transmission procedure continues repeatedly until the set of possible messages contains one message. The detailed encoding and decoding procedures are described in Algorithms \ref{alg:encoding} and \ref{alg:decoding}.

\begin{algorithm}[h!]
\caption{Encoding Procedure}
\label{alg:encoding}
\begin{algorithmic}
\While {Set of possible messages contains more than one message}		
   	\State Label the unit interval according to $L_1$.
    \State Transmit the label of the sub-interval containing the message.
    \While {Received symbol is an erasure}
        \State Exchange the labels of $[\bar{p}\bar{q},\bar{p})$ and $[\bar{p},1)$.
        \State Transmit the label of the sub-interval containing the message.
    \EndWhile
    \If {Received symbol is $'0'$}
        \State Denote the messages within sub-intervals which are labelled $'0'$ as the set of possible messages.
    \Else
        \State Denote the messages within sub-intervals which are labelled $'1'$ as the set of possible messages
        \State Transmit $'0'$.
    \EndIf
    \State Expand the set of possible messages to the unit interval.
\EndWhile
\end{algorithmic}
\end{algorithm}

\begin{algorithm}
\caption{Decoding Procedure}          
\label{alg:decoding}
\begin{algorithmic}                    
\While {Set of possible messages contains more than one message}
   	\State Label the unit interval according to $L_1$.
    \While {Received symbol is an erasure}
        \State Exchange the labels of $[\bar{p}\bar{q},\bar{p})$ and $[\bar{p},1)$.
    \EndWhile
    \If {Received symbol is $'0'$}
        \State Denote the messages within sub-intervals which are labelled $'0'$ as the set of possible messages.
    \Else
        \State Denote the messages within sub-intervals which are labelled $'1'$ as the set of possible messages.
        \State Ignore the next received symbol.
    \EndIf
    \State Expand the set of possible messages to the unit interval.
\EndWhile
\end{algorithmic}
\end{algorithm}

\textbf{Rate Analysis:}
The main feature of this coding scheme is that the length of the sub-interval that is labelled by $'1'$ is $p$. This property is recorded as Lemma \ref{lemma:constant_amount}.
\begin{lemma}\label{lemma:constant_amount}
At any step of the message transmission process, the lengths of the sub-intervals that are labelled by $'1'$ sum up to $p$.
\end{lemma}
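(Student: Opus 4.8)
The plan is to prove the invariant by induction on the ``successful transmission'' steps, keeping track of the total length of the sub-intervals labelled $'1'$ after each re-labelling. Recall that the scheme always begins a transmission round at the ground state with labelling $L_1$: the unit interval is split as $[0,\bar p)$ labelled $'0'$ and $[\bar p, 1)$ labelled $'1'$. So immediately after the set of possible messages is expanded to the unit interval, the $'1'$-labelled region has total length $p$. This is the base case.

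For the induction step I would examine the two sub-cases that can occur inside a round. First, the erasure loop: the only operation there is to ``exchange the labels of $[\bar p\bar q,\bar p)$ and $[\bar p,1)$.'' The interval $[\bar p\bar q,\bar p)$ has length $\bar p - \bar p\bar q = \bar p q = \bar p\cdot\frac{p}{1-p} = p$, and the interval $[\bar p,1)$ has length $p$. Since both swapped intervals have the same length $p$, exchanging their labels leaves the total $'1'$-labelled length unchanged at $p$; this handles all the states encountered during a run of consecutive erasures. Second, the transition through the non-ground state after a $y_t=1$: here the encoder is forced to transmit $'0'$, the decoder knows the input is $'0'$, so there is no partition to maintain during that single forced step, and the next partition happens only after the set of possible messages has been re-expanded to the unit interval under $L_1$ — which again gives total $'1'$-length $p$. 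Finally, when a successful transmission occurs ($y_t=1$, or $y_t=0$ with $y_{t-1}\neq 1$), the surviving messages are exactly those in the sub-intervals of one label, and they are stretched uniformly onto $[0,1)$, after which the next round re-applies $L_1$; so the invariant is restored at the start of the next round.

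The one point that needs care — and is really the crux — is checking that the two labellings $L_1$ and $L_2$ are genuinely the only labellings that ever arise, so that the ``exchange $[\bar p\bar q,\bar p)$ and $[\bar p,1)$'' description of every erasure step is accurate; this was argued in Section \ref{subsec:relation} via Fig. \ref{fig:encoding_example}, where it is shown that the partition induced by the optimal action at state $z=1-p$ (namely $\mathrm{Ber}(q)$ with $q=\frac{p}{1-p}$ applied to the $'0'$-labelled mass of total length $1-p$) carves off exactly an additional length $p$ to be labelled $'1'$, and that this re-labelling toggles between $L_1$ and $L_2$. Granting that structural fact, the lemma reduces to the elementary length computation $\bar p q = p$ together with the observation that each expansion step re-normalizes to the unit interval and re-enters with $L_1$, so I expect the write-up to be short. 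The main obstacle is purely bookkeeping: making sure every branch of Algorithm \ref{alg:encoding} (ground state, erasure loop, the forced-$'0'$ step after $y_t=1$, and the expansion) is covered by the induction and that no branch changes the total $'1'$-labelled length away from $p$.
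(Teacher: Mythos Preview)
Your proposal is correct and takes essentially the same approach as the paper: the paper's proof is the two-line observation that only the labellings $L_1$ and $L_2$ ever occur, and in each the $'1'$-labelled region has length $p$ (since $|[\bar p,1)|=p$ and $|[\bar p\bar q,\bar p)|=\bar p q=p$). Your induction is a more elaborate wrapper around exactly this computation; once you grant (as you note) the structural fact from Section~\ref{subsec:relation} that the erasure loop merely toggles between $L_1$ and $L_2$, the inductive step collapses to the same one-line length identity the paper uses.
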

\begin{proof}
Throughout transmission, there are two possible labellings; for $L_1$, the interval $[\bar{p},1)$ that is labelled $'1'$ has length of $p$, while for $L_2$, the interval $[\bar{p}\bar{q},\bar{p})$ has length of $\bar{p}q=p$.
\end{proof}
From Lemma \ref{lemma:constant_amount}, we note that the encoder transmits $'1'$ if message falls within sub-interval that has length of $p$. However, the messages are discrete points and a partition might fall between two messages. This implies that the transmitted bit is distributed as $\text{Ber}(p+e_i)$, where $e_i$ is a correction factor. In Appendix \ref{app:analysis_accurate}, it is shown that the correction factor has a negligible effect on the rate of the coding scheme. To simplify the derivations here, with some loss of accuracy, we say that each transmitted bit is distributed according to $\text{Ber}(p)$.

In the next lemma, we show that each successful transmission reduces the expected number of bits that is required to describe the set of possible messages by $H_b(p)$.
\begin{lemma}\label{lemma:const_amount}
With each successful transmission, the expected number of bits that describe the set of possible messages is reduced by $H_b(p)$.
\end{lemma}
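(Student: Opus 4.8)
The plan is to track a single quantity through one "successful transmission" cycle: the expected number of bits needed to describe the set of possible messages, which for a uniform set $\mathcal{M}_{t-1}$ of size $|\mathcal{M}_{t-1}|$ is $\log_2|\mathcal{M}_{t-1}|$. Since, by the encoding/decoding algorithms, the set of possible messages is always re-expanded uniformly over the unit interval after each successful transmission, it suffices to compute the expected value of $\log_2\frac{|\mathcal{M}_{t-1}|}{|\mathcal{M}_{t}|}$ where $t$ is the (random) time at which the next successful transmission completes. I would show this expectation equals $H_b(p)$.

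First I would set up the two cases that constitute a successful transmission, as already isolated in the text just before the lemma. Case A: the first non-erasure output is $y=0$. By Lemma \ref{lemma:constant_amount}, at every step the sub-intervals labelled $'1'$ have total length $p$ and those labelled $'0'$ have total length $\bar p$; since each transmitted bit is (to the approximation adopted in the text) distributed $\text{Ber}(p)$, the symbol $'0'$ is the channel input with probability $\bar p$ conditioned on non-erasure, and in that event the surviving set $\mathcal{M}_t$ consists of exactly the fraction $\bar p$ of $\mathcal{M}_{t-1}$ lying in the $'0'$-labelled region, so $\log_2\frac{|\mathcal{M}_{t-1}|}{|\mathcal{M}_t|} = -\log_2\bar p$. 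Case B: the first non-erasure output is $y=1$. This happens, conditioned on non-erasure, with probability $p$; the surviving set is the fraction $p$ labelled $'1'$, giving $\log_2\frac{|\mathcal{M}_{t-1}|}{|\mathcal{M}_t|} = -\log_2 p$. (In Case B the algorithm then forces a $'0'$ and ignores the following symbol, but this does not further shrink the message set, so it contributes nothing to the reduction.) A key observation making these two cases exhaustive is that erasures never change the set of possible messages — the labelling is merely swapped between $L_1$ and $L_2$, both known to encoder and decoder — so conditioning on the value of the first non-erased output is legitimate and the geometric "waiting for a non-erasure" has no effect on the message-set size.

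Combining, the expected reduction in $\log_2|\mathcal{M}|$ per successful transmission is
\begin{align*}
\bar p\cdot(-\log_2\bar p) + p\cdot(-\log_2 p) = H_b(p),
\end{align*}
which is the claim. The main obstacle — and the only genuinely delicate point — is the passage from "each transmitted bit is exactly $\text{Ber}(p)$" to the true situation, in which a partition point of the discrete message set need not align with the nominal split at $p$, so the bit is really $\text{Ber}(p+e_i)$ for a correction term $e_i$. The text explicitly defers the rigorous handling of $e_i$ to Appendix \ref{app:analysis_accurate} and licenses the $\text{Ber}(p)$ idealization here, so in this proof I would simply invoke that convention, state the per-step reduction as $H_b(p)$ under it, and point to the appendix for the claim that the aggregate effect of the $e_i$ on the overall rate is negligible. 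A secondary point worth a sentence is well-definedness: the inner while-loop (waiting out erasures) terminates almost surely since $\epsilon<1$, so a successful transmission occurs in finite expected time and the quantity we are differencing is finite.
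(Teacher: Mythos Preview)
Your proposal is correct and follows essentially the same approach as the paper: both compute the expected $\log_2$-size of the surviving message set by conditioning on whether the first non-erased symbol is $'0'$ (probability $\bar p$, surviving fraction $\bar p$) or $'1'$ (probability $p$, surviving fraction $p$), yielding an expected reduction of $H_b(p)$. Your write-up is considerably more detailed than the paper's three-line proof --- in particular your remarks that erasures do not shrink the message set, that the forced $'0'$ after a received $'1'$ contributes no further reduction, and that the $\text{Ber}(p)$ idealization is handled in the appendix --- but these are elaborations of the same argument rather than a different route.
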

\begin{proof}
Assume that the set of possible messages is of size $k$; upon a successful transmission, if $'0'$ is received then the new set of possible messages has size $\bar{p}{k}$, and if $'1'$ is received then its new size is $p{k}$. The expected number of bits that is required to describe the new set of possible messages is $\bar{p}\log_2(\bar{p}{k}) + p \log_2( p{k})=\log_2 k-H_b(p)$.
\end{proof}
The next step is to calculate the expected number of channel uses for a \emph{complete procedure}. We define a complete procedure to consist of all transmissions by the encoder starting at some time $t$ at which it is in the ground state, and ending at the first time $t' > t$ at which it returns to the ground state. In other words, a procedure is completed when a $'0'$ or $'1'$ is received at the decoder, including one extra channel use in the case when a $'1'$ has been received and has to be followed by $'0'$.

Let $N$ be a random variable corresponding to the number of channel uses within a complete procedure. The expected value of $N$ will be calculated by the law of total expectation. Define an indicator function \begin{align*}
\theta = \begin{cases} 0 &\mbox{if the received bit is $'0'$} \\
1 & \mbox{if the received bit is $'1'$}, \end{cases} \end{align*}
and consider,
\begin{align*}
  \mathbb{E}[N] &\stackrel{(a)}= \mathbb{E}[\mathbb{E}[N|\theta]] \\
   &\stackrel{(b)}= \mathbb{E}[\frac{1}{1-\epsilon} + \theta] \\
   &\stackrel{(c)}= \frac{1}{1-\epsilon} + p,
\end{align*}
where $(a)$ follows from the law of total expectation, $(b)$ follows from the fact that channel is memoryless and, therefore, $\frac{1}{1-\epsilon}$ is the expected value of time to receive a symbol which is not an erasure, and $(c)$ follows from $\mathbb{E}[\theta]=\Pr(\theta=1)$.

Finally, we prove the second part of Theorem \ref{theorem:main}, specifically, the rate of this coding scheme can be arbitrary close to the capacity expression, $C^{\text{fb}}_{\epsilon}$.
\begin{proof}
It follows from the law of large numbers that the rate of our coding scheme can be arbitrarily close to the expected number of received bits within a complete procedure divided by the expected number of channel uses within a complete procedure. In Lemma \ref{lemma:const_amount}, we showed that within a successful transmission, the expected number of received bits is $H_b(p)$. Moreover, the expected number of channel uses within a complete procedure is $\mathbb{E}[N]=\frac{1}{1-\epsilon}+p$. Therefore, the rate of the code can be arbitrarily close to $R=\frac{H_b(p)}{p + \frac{1}{1-\epsilon}}$.
\end{proof}
The above proof and Theorem \ref{theorem:sol_bellman} conclude the proof of our main result Theorem \ref{theorem:main}.
\section{Non-causal Capacity}\label{sec:upperbound}
In this section, we prove Theorem \ref{theorem:non_causal} by showing that $C_{\epsilon}^{\text{nc}}=\max_{0\leq p \leq \frac{1}{2}} \frac{H_{b}(p)}{p+{\frac{1}{1-\epsilon}}}$. Operational considerations of non-causal and feedback capacities reveal the trivial inequality $C^{\text{nc}}_{\epsilon}\geq C^{\text{fb}}_{\epsilon}$. Furthermore, we derive in this section an upper-bound on $C^{\text{nc}}_{\epsilon}$, which is equal to $C_{\epsilon}^{\text{fb}}$, and this concludes the proof of Theorem \ref{theorem:non_causal} with $C^{\text{nc}}_{\epsilon}= C^{\text{fb}}_{\epsilon}$.

The next lemma shows that it is sufficient to consider encoders which transmit $'0'$ if erasure occurs, i.e., $x_{i}=0$ if $\theta_{i}=1$. The intuition behind this lemma is that replacing erased ones with zeros does not effect the output sequence, while the input-constraint is not violated.
\begin{lemma}\label{lemma:strict_encoder}
For any $(1,2^{nR},(1,\infty))$ code $\mathcal{C}$ with probability of error $P_{e}^{(n)}$, there exists a $(n,2^{nR},(1,\infty))$ code $\mathcal{C}'$ with probability of error $P_{e}^{(n)}$, satisfying
\begin{equation*}
f_i(m,y^{i-1},\theta_i=1)=0, i=1,\dots,n \ \forall (m,y^{i-1}).
\end{equation*}
\end{lemma}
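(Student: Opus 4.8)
The plan is to obtain $\mathcal{C}'$ from $\mathcal{C}$ by overriding the encoding functions only on the erasure event and leaving the decoder untouched, and then to check three things: that the override preserves the $(1,\infty)$-RLL constraint, that it leaves the channel output sequence unchanged for every message and every erasure pattern, and hence that the error probability is preserved.

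First I would record the elementary fact that drives the argument: in this channel $y_i={?}$ holds \emph{exactly} when $\theta_i=1$, so the erasure indicators $\theta^{i-1}$ are a deterministic function of the output prefix $y^{i-1}$ and, more importantly, when $\theta_i=1$ the output is $y_i={?}$ regardless of whether $x_i=0$ or $x_i=1$. Accordingly, define
\[
f'_i(m,y^{i-1},\theta_i)=\begin{cases} 0 & \text{if }\theta_i=1,\\ f_i(m,y^{i-1},\theta_i=0) & \text{if }\theta_i=0,\end{cases}
\]
and set $\Psi'=\Psi$. To see that $\mathcal{C}'$ is a legitimate code, suppose $f'_{i-1}(m,y^{i-2},\theta_{i-1})=1$; this forces $\theta_{i-1}=0$ and $f_{i-1}(m,y^{i-2},0)=1$, so the constraint on $\mathcal{C}$ yields $f_i(m,y^{i-1},\theta_i)=0$ for every $\theta_i$, whence $f'_i(m,y^{i-1},\theta_i)=0$ as required.

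Next I would prove by induction on $i$ that, for each fixed message $m$ and each fixed erasure pattern $\theta^n$, the output prefix generated by $\mathcal{C}'$ equals the output prefix $y^{i}$ generated by $\mathcal{C}$. For the inductive step, assume the two output prefixes agree through time $i-1$; then at time $i$, if $\theta_i=1$ both channels emit $?$, while if $\theta_i=0$ the new encoder transmits $f'_i(m,y^{i-1},0)=f_i(m,y^{i-1},0)$, which is precisely the (unerased) bit transmitted and received by $\mathcal{C}$ at time $i$. Hence $\mathcal{C}'$ and $\mathcal{C}$ induce the same deterministic map $(m,\theta^n)\mapsto y^n$. Since the erasures are i.i.d.\ $\text{Ber}(\epsilon)$ and independent of $M$, while $Y^n$ is a deterministic function of $(M,\theta^n)$ through this common map, the joint law of $(M,Y^n)$ is identical under $\mathcal{C}$ and $\mathcal{C}'$; therefore $\Pr(M\neq\Psi(Y^n))=\Pr(M\neq\Psi'(Y^n))=P_e^{(n)}$.

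The only delicate point --- the closest thing to an obstacle --- is circularity: the substitution $f'_i(\cdot,\cdot,0)=f_i(\cdot,\cdot,0)$ reproduces the original input on the no-erasure coordinates only if the modified encoder sees the same feedback $y^{i-1}$ as the original one, and this is exactly what the induction above secures, using that flipping an erased $'1'$ to a $'0'$ is invisible at the channel output. Everything else is bookkeeping.
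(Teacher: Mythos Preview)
Your proof is correct and follows essentially the same construction as the paper: override the encoder to output $0$ on erasures, keep the decoder, and argue that the $(1,\infty)$-RLL constraint survives (since only ones are replaced by zeros) while the output sequence is unchanged (since only erased symbols are altered), so $P_e^{(n)}$ is preserved. Your version is more careful than the paper's in two respects---you verify the constraint at the level of the encoding functions rather than just the realized input sequence, and you make explicit the induction on $i$ that resolves the potential circularity between ``same feedback'' and ``same output''---but the underlying argument is identical.
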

\begin{proof}
For any $(1,2^{nR},(1,\infty))$ code $\mathcal{C}$ consisting of encoding functions, $\{f_i(\cdot)\}_{i=1}^{n}$, and a decoding function $\Psi(\cdot)$ with probability of error $P_{e}^{(n)}$, define a new sequence of encoding functions as follows:
\begin{equation}\label{eq:new_encoding}
    \tilde{f}_{i}(m,y^{i-1},\theta_i)=\left\{\begin{array}{cc}
    f_{i}(m,y^{i-1},\theta_i) & \text{if } \theta_i=0, \\
    0 & \text{if } \theta_i=1, \end{array}\right.
\end{equation}
 for all $(m,y^{i-1})$ and $i=1,\dots,n$.
We argue that $\{\tilde{f}_i(\cdot)\}_{i=1}^{n}$ and the original decoding function $\Psi(\cdot)$ determine a new code with the same probability of error $P_{e}^{(n)}$.
First, the set of encoding functions, $\{\tilde{f}_i(\cdot)\}_{i=1}^{n}$, satisfies the input constraint, since we replaced ones with zeros. Further, the output sequence is not affected by our modification, since we replaced only bits that are erased, and therefore, our new code also has probability of error $P_{e}^{(n)}$.
\end{proof}

We introduce $(1,\infty,\text{Ber}(\epsilon))$-RLL \textit{encoder}, which outputs sequences $X^n$ that satisfies two constraints:
\begin{enumerate}
  \item The $(1,\infty)$-RLL constraint.
  \item $X_i=0$ if $\theta_i=1$ (the constraint induced by Lemma \ref{lemma:strict_encoder}).
\end{enumerate}
The second constraint can be viewed as a "random constraint" since $\theta_i\sim\text{Ber}(\epsilon)$, while the first constraint is a deterministic constraint. Thus, the $(1,\infty,\text{Ber}(\epsilon))$-RLL encoder combines both deterministic and random constraints.

The entropy rate of $(1,\infty,\text{Ber}(\epsilon))$-RLL encoder is measured by $\lim_{n\rightarrow\infty} \sum_{i=1}^n H(X_i|X^{i-1},\theta^i)$ since this is the available information at the encoder. The next lemma provides an upper bound on the entropy rate of sequences that can be generated by a $(1,\infty,\text{Ber}(\epsilon))$-RLL encoder.
\begin{lemma}\label{lemma:entropy_rate}
The entropy rate of sequences that are generated by a $(1,\infty,\text{Ber}(\epsilon))$-RLL encoder is upper bounded by
$\max_{0\leq p\leq \frac{1}{2}}\frac{H_{b}(p)}{p+\frac{1}{1-\epsilon}}$.
\end{lemma}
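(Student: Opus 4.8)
The plan is to reduce the entropy-rate bound to a renewal-reward computation driven by the runs of the generated sequence. A $(1,\infty)$-RLL sequence with the additional property that $X_i=0$ whenever $\theta_i=1$ can be parsed into blocks, where a block ends precisely when a $'1'$ is transmitted (necessarily at a non-erased position) or — to keep blocks bounded — by a suitable renewal structure tied to the first non-erased position after a $'1'$. First I would fix the natural decomposition: let the encoder, between renewal epochs, choose bits i.i.d.\ $\text{Ber}(q_j)$ at non-erased positions (the $\theta_i=1$ positions contribute zero entropy and are forced to $'0'$), until a $'1'$ is emitted, which forces the next position to $'0'$. By the chain rule, $\sum_{i=1}^n H(X_i\mid X^{i-1},\theta^i)$ over a renewal cycle equals the sum of the per-symbol binary entropies $H_b(\cdot)$ of the actual transmission probabilities, summed only over non-erased positions, and the cycle length has a controllable expectation.

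The key steps, in order, are: (i) invoke Lemma \ref{lemma:strict_encoder} so that WLOG $X_i = 0$ on erased positions, making the only entropy-bearing decisions those at non-erased positions; (ii) apply a renewal/ergodic argument (Fekete plus a law-of-large-numbers or a standard renewal-reward theorem for $\liminf$ of Cesàro averages) so that the entropy rate is bounded by $\sup \frac{\mathbb{E}[\text{entropy produced per cycle}]}{\mathbb{E}[\text{channel uses per cycle}]}$ over all stationary strategies, where a cycle is a complete run; (iii) within one cycle, upper bound the numerator: if the encoder uses Bernoulli parameter $p$ at the (random number of) non-erased positions until a $'1'$ appears, the total expected entropy produced is at most $H_b(p)$ (this is exactly the computation in Lemma \ref{lemma:const_amount} — each successful emission of a symbol contributes $H_b(p)$, and by optimality of the Bernoulli choice and concavity of $H_b$, a time-varying choice cannot do better per cycle); (iv) compute the denominator as in the rate analysis of Section \ref{subsec:coding_scheme}: the expected number of channel uses in a cycle is $\frac{1}{1-\epsilon}$ (geometric waiting time for a non-erased symbol) plus an extra $p$ for the forced $'0'$ following a transmitted $'1'$, i.e.\ $\mathbb{E}[N]=p+\frac{1}{1-\epsilon}$; (v) combine (iii)–(iv) to get the bound $\frac{H_b(p)}{p+\frac{1}{1-\epsilon}}$ and maximize over $p\in[0,\tfrac12]$, which is legitimate since by Lemma \ref{lemma:concave} the unconstrained maximizer already lies in $[0,\tfrac12]$.

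The main obstacle I anticipate is step (iii): carefully arguing that, over a single renewal cycle, \emph{no} adaptive (output- and $\theta$-dependent) choice of transmission probabilities beats the constant-$p$ Bernoulli strategy in the ratio sense. The subtlety is that the cycle length is itself random and coupled to the choices, so one cannot naively separate numerator and denominator; the clean way is to first pass to the ratio objective via the renewal-reward theorem (so that only \emph{expected} numerator and \emph{expected} denominator matter), then observe that, conditioned on the sequence of erasure patterns, the per-run entropy contribution telescopes exactly as in Lemma \ref{lemma:const_amount} and is maximized termwise by concavity of $H_b$, with the Bernoulli-$p$ strategy being the stationary optimizer. An alternative, perhaps cleaner, route is to bound $\sum_i H(X_i\mid X^{i-1},\theta^i) \le \sum_i H(X_i\mid X^{i-1},\theta_i)$ and recognize the right-hand side as the expected reward of exactly the DP / Bellman problem solved in Theorem \ref{theorem:sol_bellman}, so that $\rho_\epsilon^\ast = \max_{0\le p\le 1/2}\frac{H_b(p)}{p+\frac{1}{1-\epsilon}}$ is immediately the required upper bound; I would mention this connection as it makes the lemma essentially a corollary of the DP analysis already established.
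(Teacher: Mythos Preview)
Your proposal takes a genuinely different route from the paper's, and the paper's argument is considerably more elementary. The paper does not use renewal cycles at all: it applies a single ``conditioning reduces entropy'' step,
\[
H(X_i\mid X^{i-1},\theta^i)\le H(X_i\mid X_{i-1},\theta_i),
\]
and then observes that the right-hand side vanishes unless $x_{i-1}=0$ and $\theta_i=0$, so it equals $\bar\epsilon\,p(x_{i-1}=0)\,H_b(p)$ with $p=p(x_i=1\mid x_{i-1}=0,\theta_i=0)$. Treating $p$ as fixed, $\{X_i\}$ is a two-state Markov chain with transition matrix $\bigl[\begin{smallmatrix}\epsilon+\bar\epsilon\bar p & \bar\epsilon p\\ 1 & 0\end{smallmatrix}\bigr]$ and stationary probability $p(x_{i-1}=0)=\frac{1}{1+\bar\epsilon p}$. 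Substituting gives the bound $\frac{\bar\epsilon H_b(p)}{1+\bar\epsilon p}=\frac{H_b(p)}{p+\frac{1}{1-\epsilon}}$ in one line, and the maximization over $p\in[0,\tfrac12]$ finishes it.

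Compared with this, your renewal-reward decomposition is heavier and, as you yourself flag, step~(iii) is where the work actually lies: showing that no adaptive, history-dependent choice of transmission probabilities beats the constant-$p$ strategy in the ratio $\frac{\mathbb{E}[\text{entropy}]}{\mathbb{E}[\text{length}]}$ is precisely the optimization the paper sidesteps by collapsing everything to the one-step conditional $H(X_i\mid X_{i-1},\theta_i)$. Appealing to Lemma~\ref{lemma:const_amount} does not help here, since that lemma analyzes a specific achievability scheme rather than furnishing a converse bound. Your alternative route through the Bellman equation (Theorem~\ref{theorem:sol_bellman}) would indeed close the argument, but it imports the entire DP machinery for what the paper handles with a stationary-distribution calculation on a two-state chain. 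In short: your plan is workable but circuitous; the key move you are missing is that a single conditioning step reduces the problem to a time-homogeneous Markov chain whose entropy rate is immediate.
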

\begin{proof}
Recall that the encoder can choose its output bit, $x_i$, only if $x_{i-1}=\theta_i=0$; we parameterize this by $p(x_i=1|x_{i-1}=0,\theta_i=0)=p$, where $p\in[0,1]$. Now, consider the transition probability matrix of the chain $X^n$,
\begin{align*}
    \mathbf{Q} =& \left[
                   \begin{array}{cc}
                     \epsilon + \bar{\epsilon}\bar{p} & \bar{\epsilon}p \\
                     1 & 0 \\
                   \end{array}
                 \right],
\end{align*}
where the transition probability $\epsilon + \bar{\epsilon}\bar{p}$ was calculated by
\begin{equation*}
p(x_i=0|x_{i-1}=0)=\sum_{\theta_i}p(x_i=0,\theta_i|x_{i-1}=0).
\end{equation*}
The stationary distribution of this chain is $[x^{\ast}(0) \ x^{\ast}(1)] = [\frac{1}{1+\bar{\epsilon}p} \ \frac{\bar{\epsilon}p}{1+\bar{\epsilon}p}]$.

Consider the next upper bound for some $i$,
\begin{align}\label{eq:lemma_non_proof}
    H(X_i|X^{i-1},\theta^i)&\stackrel{(a)}\leq H(X_i|X_{i-1},\theta_i) \nonumber\\
     &\stackrel{(b)}= H(X_i|X_{i-1},\theta_i=0)\bar{\epsilon}\nonumber\\
     &\stackrel{(c)}= H(X_i|x_{i-1}=0,\theta_i=0)p(x_{i-1}=0|\theta_i=0)\bar{\epsilon} \nonumber\\
     &\stackrel{(d)}= H_b(p)p(x_{i-1}=0)\bar{\epsilon}
\end{align}
where $(a)$ follows conditioning reduces entropy, $(b)$ follows from $H(X_i|X_{i-1},\theta_i=1)=0$, $(c)$ follows from $H(X_i|x_{i-1}=1,\theta_i=0)=0$, and $(d)$ follows from the fact that $X_{i-1}$ is independent of $\theta_i$ and substituting the parameter $p$.

By substituting the stationary distribution $p(x_{i-1}=0)=x^{\ast}(0)$ into \eqref{eq:lemma_non_proof}, we see that the entropy rate of the chain is upper bounded by $\frac{\bar{\epsilon}H_b(p)}{1+\bar{\epsilon}p}$, for some $p\in[0,1]$. This term can also be written as $\frac{H_b(p)}{\frac{1}{\bar{\epsilon}}+p}$, and the parameter $p$ need be maximized only on $[0,0.5]$ from Lemma \ref{lemma:concave}.
\end{proof}
The rate of the message $M$ is upper bounded by the entropy rate of sequences that can be generated by a $(1,\infty,\text{Ber}(\epsilon))$-RLL encoder, and this concludes the proof of Theorem \ref{theorem:non_causal} with
\begin{align*}
  C_{\epsilon}^{\text{nc}} &\leq  \max_{0\leq p\leq \frac{1}{2}}\frac{H_{b}(p)}{p+\frac{1}{1-\epsilon}}\\
   &=  C_{\epsilon}^{\text{fb}}.
\end{align*}
\section{Conclusions}\label{sec:conclusions}
We considered the setup of an input-constrained erasure channel with feedback and found its capacity using equivalent DP. We then pursued the complementary derivation of a simple and error-free capacity-achieving coding scheme, which we found using the strong relation between optimal policies in DP and encoding procedures in channel coding. Moreover, we have shown that the capacity remains the same even if the erasure is known non-causally to the encoder.

Following the theorem that feedback does not increase the capacity of a memoryless channel \cite{shannon56}, Shannon also argued that this theorem can be extended to channels with memory if the channel state can be computed at the encoder. Our system setting falls into this criteria, since the previous input of the channel can be thought of as the channel state. The proof for Shannon's argument was omitted, although not trivial, and still stands as a conjecture.

Following Shannon's conjecture, it could be interesting to derive the capacity of the input-constrained erasure channel with delayed feedback, namely, when the input to the channel at time $i$ depends on the message and the tuple $Y^{i-\nu}$, where $\nu$ is the delay of the feedback. Dynamic programming formulation for the delayed-feedback capacity is feasible and could shed light on Shannon's conjecture and on the capacity of the input-constrained erasure channel without feedback. Furthermore, a model with arbitrary delayed feedback will provide a new upper bound for the capacity of the input-constrained BEC without feedback, a problem that is wide open.
\appendices
\section{Proof of Lemma }\label{app:lemma concave}
\begin{proof}[Proof of Lemma \ref{lemma:concave}]
\begin{itemize}
\item A sufficient condition for the concavity of a function $f(p)$ is that the second derivative is negative for any value of $p$. We denote $k=\frac{1}{1-\epsilon}$ and find a condition on $k$ such that the second derivative is negative. To simplify the derivations, we take $H_{b}(\cdot)$ to be the binary entropy with the natural logarithm base, since multiplication by a constant does not effect concavity. Calculation shows that
\begin{align}
  \frac{d^2}{dp^2}\left(\frac{H_b(p)}{p+k}\right) &= \frac{\frac{(p+k)^2}{p(p-1)} -2k\ln\left(\frac{1-p}{p}\right) - 2 \ln (1-p)}{p^3}.
\end{align}
It suffices to examine the sign of the numerator, since $p^3\geq 0$. Define $g(p)\triangleq \frac{(p+k)^2}{p(p-1)} -2k\ln\left(\frac{1-p}{p}\right) - 2 \ln (1-p)$. Derivation of the maximum for $g(p)$ shows that it has only one maximum, which is at $p=\frac{1}{2}$. Substituting $g(\frac{1}{2})= -4(\frac{1}{2}+k)^2 + 2 \ln 2$. It then follows that $g(p)\leq 0, \forall p\in[0,1]$ if and only if $k\geq \sqrt{\frac{1}{2}\ln 2}- \frac{1}{2} \sim 0.088$.
\item Derivation of the first derivative of $f(p)$ shows that the derivative is equal to zero if and only if $p^{\frac{1}{\bar{\epsilon}}}=(1-p)^{1+\frac{1}{\bar{\epsilon}}}$ holds. The uniqueness of the maximum point follows from the fact that $p^{\frac{1}{\bar{\epsilon}}}$ increases as $p$ grows, while $(1-p)^{1+\frac{1}{\bar{\epsilon}}}$ decreases with a growing $p$.

 Now, assume that the maximum is $p_m\in(\frac{1}{2},1]$. Symmetry of the binary entropy function implies $H_b(p_m) = H_b(\bar{p}_m)$, and therefore, it is sufficient to examine the denominator. Since both arguments $p_m,\bar{p}_m\in[0,1]$, it then follows that $f(p_m)<f(\bar{p}_m)$, which is a contradiction.
\item This property follows from substituting the relation $p^{\frac{1}{\bar{\epsilon}}}=(1-p)^{1+\frac{1}{\bar{\epsilon}}}$ into the function $f(p)$.

\end{itemize}
\end{proof}
\section{Proof of Theorem \ref{theorem:sol_bellman}}\label{app:proof_bellman}
The next lemma is technical and will be useful in the proof of Theorem \ref{theorem:sol_bellman}.
\begin{lemma}\label{lemma:same_max}
    The function $f_{\epsilon}(z)= \bar{\epsilon}H_b(z) -z\bar{\epsilon}\frac{H_b(p_\epsilon)}{p_\epsilon+\frac{1}{\bar{\epsilon}}}$ is concave on $[0,1]$ and its maximum is at $z=p_\epsilon$, where $p_\epsilon = \argmax_{0\leq p \leq \frac{1}{2}} \frac{H_b(p)}{p+\frac{1}{1-\epsilon}}$.
\end{lemma}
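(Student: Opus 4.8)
The plan is to exploit the fact that $f_{\epsilon}(z)$ is, up to an affine term in $z$, nothing but a scaled binary entropy, and that the slope of that affine term has been chosen to be exactly $H_b'(p_\epsilon)$. Concavity is then immediate, and the first-order (stationarity) condition for $f_{\epsilon}$ at $z=p_\epsilon$ will coincide with the first-order optimality condition that \emph{defines} $p_\epsilon$ in Lemma~\ref{lemma:concave}. For concavity: if $\epsilon=1$ then $\bar{\epsilon}=0$ and $f_{\epsilon}\equiv 0$, which is (trivially) concave, so assume $\bar{\epsilon}>0$. Since $H_b$ is concave on $[0,1]$, so is $\bar{\epsilon}H_b(z)$ (nonnegative scaling), and adding the affine term $-z\bar{\epsilon}\frac{H_b(p_\epsilon)}{p_\epsilon+\frac{1}{\bar{\epsilon}}}$ preserves concavity; hence $f_{\epsilon}$ is concave (in fact strictly concave on $(0,1)$) on $[0,1]$.

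To locate the maximum, note that $f_{\epsilon}$ is differentiable on $(0,1)$ with
\[
f_{\epsilon}'(z)=\bar{\epsilon}H_b'(z)-\bar{\epsilon}\frac{H_b(p_\epsilon)}{p_\epsilon+\frac{1}{\bar{\epsilon}}},\qquad H_b'(z)=\log_2\frac{1-z}{z},
\]
so $f_{\epsilon}'(z)\to+\infty$ as $z\downarrow 0$ and $f_{\epsilon}'(z)\to-\infty$ as $z\uparrow 1$. By strict concavity, the maximizer is the unique interior point with $f_{\epsilon}'(z)=0$, i.e.\ the unique solution in $(0,1)$ of $H_b'(z)=\frac{H_b(p_\epsilon)}{p_\epsilon+\frac{1}{\bar{\epsilon}}}$.

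It then remains to verify that $z=p_\epsilon$ solves this equation. By Lemma~\ref{lemma:concave}, $p_\epsilon\in\bigl(0,\frac{1}{2}\bigr]$ is the unique maximizer of $g(p)\triangleq\frac{H_b(p)}{p+\frac{1}{\bar{\epsilon}}}$ on $[0,1]$, and it is an interior point (since $g(0)=0<g(p_\epsilon)$ and $p_\epsilon\leq\frac{1}{2}<1$), so $g'(p_\epsilon)=0$. Applying the quotient rule, $g'(p_\epsilon)=0$ means $H_b'(p_\epsilon)\bigl(p_\epsilon+\frac{1}{\bar{\epsilon}}\bigr)=H_b(p_\epsilon)$, i.e.\ $H_b'(p_\epsilon)=\frac{H_b(p_\epsilon)}{p_\epsilon+\frac{1}{\bar{\epsilon}}}$ — which is precisely the stationarity condition found above. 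Hence $f_{\epsilon}'(p_\epsilon)=0$, and by concavity $p_\epsilon$ is the global maximizer of $f_{\epsilon}$ on $[0,1]$.

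I do not expect a genuine obstacle here; the only care needed is the degenerate case $\epsilon=1$ and the verification that $p_\epsilon$ is an interior point of $[0,1]$ (so that its first-order condition is available), both of which are handled above using Lemma~\ref{lemma:concave}. The single conceptual point is recognizing that the slope of the linear correction term in $f_{\epsilon}$ equals the marginal value $H_b'(p_\epsilon)$, which is exactly what the optimality of $p_\epsilon$ for the ratio $g$ forces.
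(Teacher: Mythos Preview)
Your proof is correct and follows essentially the same approach as the paper: both argue concavity directly from the concavity of $H_b$ plus an affine term, and both verify that $f_{\epsilon}'(p_\epsilon)=0$ by invoking the first-order optimality condition $H_b'(p_\epsilon)\bigl(p_\epsilon+\tfrac{1}{\bar{\epsilon}}\bigr)=H_b(p_\epsilon)$ that defines $p_\epsilon$ via Lemma~\ref{lemma:concave}. Your version is a bit more careful about the degenerate case $\epsilon=1$ and about $p_\epsilon$ being interior, but the argument is the same.
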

\begin{proof}[Proof of Lemma \ref{lemma:same_max}]
The concavity of $f_{\epsilon}(z)$ on $z\in[0,1]$ follows from the concavity of the binary entropy function, and therefore, it suffices to show that the first derivative of $f_{\epsilon}(z)$ at $p_{\epsilon}$ is equal to zero.
The definition of $p_{\epsilon}$, \eqref{eq:rho_z_defintion}, and Lemma \ref{lemma:concave} imply the relation, $\frac{d}{dz}\left[ \frac{H_b(z)}{z+\frac{1}{\bar{\epsilon}}}\right]\vline_{z=p_\epsilon}=0$, which is equivalent to
\begin{align}\label{relation}
        H^{'}_{b}(p_\epsilon)(p_\epsilon+\frac{1}{\bar{\epsilon}}) - H_{b}(p_\epsilon)=0.
\end{align}

The first derivative of $f_{\epsilon}(z)$ at the point $p_\epsilon$ is:
\begin{align*}
  \frac{d}{dz}\left[\bar{\epsilon}H_b(z) - z\bar{\epsilon}\frac{H_b(p_\epsilon)}{p_\epsilon+\frac{1}{\bar{\epsilon}}}\right]\vline_{z=p_\epsilon}
        &= \left(\bar{\epsilon}H^{'}_b(z) - \bar{\epsilon}\frac{H_b(p_\epsilon)}{p_\epsilon+\frac{1}{\bar{\epsilon}}}\right)\vline_{z=p_\epsilon} \\
        &= \frac{\bar{\epsilon}H^{'}_b(z)(p_\epsilon+\frac{1}{\bar{\epsilon}}) - \bar{\epsilon}H_{b}(p_\epsilon)}{p_\epsilon+\frac{1}{\bar{\epsilon}}} \\
        &\stackrel{(a)}= 0.
\end{align*}
where $(a)$ follows from \eqref{relation}.
\end{proof}
We proceed to the proof of Theorem \ref{theorem:sol_bellman}.
\begin{proof}[Proof of Theorem \ref{theorem:sol_bellman}]
Substituting $z=0$ into \eqref{eq_our_bellman} yields $\rho_{\epsilon} + h_{\epsilon}(0) = h_{\epsilon}(1)$. It can be shown that if $h_{\epsilon}(z)$ solves \eqref{eq_our_bellman}, then any function of the form $h_{\epsilon}(z) + constant $ also solves this equation. Therefore, we can fix $h_{\epsilon}(0)=0$, which implies that $h_{\epsilon}(1) = \rho_{\epsilon}$. It then follows that the DP operator with the function $h_{\epsilon}^{\ast}(z)$ is:
\begin{align*}
(Th_{\epsilon}^{\ast})(z)&=\sup_{0\leq\delta \leq z} \bar{\epsilon}H_{b}(\delta) + \bar{\epsilon}(1-\delta)\frac{H_b(p_\epsilon)}{p_\epsilon+\frac{1}{\bar{\epsilon}}} +\epsilon h_{\epsilon}^{\ast}(1-\delta).
\end{align*}
Now, the term $h_{\epsilon}^{\ast}(1-\delta)$ is calculated for two cases:
\begin{align}\label{eq:value_cases}
     h_{\epsilon}^{\ast}(1-\delta) =
     \begin{cases} \bar{\epsilon}H_b(\delta) - (1-\delta)\bar{\epsilon}\frac{H_b(p_\epsilon)}{p_\epsilon+\frac{1}{\bar{\epsilon}}} &\mbox{if } 1- \delta \leq p_\epsilon  \\
        \frac{H_b(p_\epsilon)}{p_\epsilon+\frac{1}{\bar{\epsilon}}} & \mbox{if } 1-\delta \geq p_\epsilon.
     \end{cases}
\end{align}
To complete the proof, we have three cases for calculating the operator $(Th_{\epsilon}^{\ast})(z)$:
\begin{itemize}
\item
For $0\leq z < p_\epsilon$, the constraint $0\leq\delta\leq z$ implies that $0\leq\delta <p_\epsilon$, and from \eqref{eq:value_cases}, we have $h^{\ast}(1-\delta) = \frac{H_b(p_\epsilon)}{p_\epsilon+\frac{1}{\bar{\epsilon}}}$.
         Let us show that \eqref{eq_our_bellman} is satisfied:
         \begin{align*}
            (Th_{\epsilon}^{\ast})(z) &= \sup_{0\leq\delta \leq z} \bar{\epsilon}H_{b}(\delta) + \bar{\epsilon}(1-\delta)\frac{H_b(p_\epsilon)}{p_\epsilon+\frac{1}{\bar{\epsilon}}} +
                              \epsilon \frac{H_b(p_\epsilon)}{p_\epsilon+\frac{1}{\bar{\epsilon}}}\\
                           &= \sup_{0\leq\delta \leq z} \bar{\epsilon}H_{b}(\delta) -\delta \bar{\epsilon}\frac{H_b(p_\epsilon)}{p_\epsilon+\frac{1}{\bar{\epsilon}}} +
                              \frac{H_b(p_\epsilon)}{p_\epsilon+\frac{1}{\bar{\epsilon}}} \\
                           &\stackrel{(a)}= \bar{\epsilon}H_{b}(z) -z \bar{\epsilon}\frac{H_b(p_\epsilon)}{p_\epsilon+\frac{1}{\bar{\epsilon}}} +
                              \frac{H_b(p_\epsilon)}{p_\epsilon+\frac{1}{\bar{\epsilon}}} \\
                           &= h_{\epsilon}^{\ast}(z) + \rho_{\epsilon}^{\ast},
        \end{align*}
        where $(a)$ follows from Lemma \ref{lemma:same_max}.
\item
        For $p_{\epsilon} \leq z < 1 - p_{\epsilon}$, the same calculation as for the previous interval shows that $h^{\ast}(1-\delta) = \frac{H_b(p_\epsilon)}{p_\epsilon+\frac{1}{\bar{\epsilon}}}$ for all $\delta\in[0,1 - p_{\epsilon}]$. Let us show that \eqref{eq_our_bellman} is satisfied:
        \begin{align*}
            (Th^{\ast})(z) &= \sup_{0\leq\delta \leq z} \bar{\epsilon}H_{b}(\delta) + \bar{\epsilon}(1-\delta)\frac{H_b(p_\epsilon)}{p_\epsilon+\frac{1}{\bar{\epsilon}}} +
                              \epsilon \frac{H_b(p_\epsilon)}{p_\epsilon+\frac{1}{\bar{\epsilon}}}\\
                           &= \sup_{0\leq\delta \leq z} \bar{\epsilon}H_{b}(\delta) -\delta \bar{\epsilon}\frac{H_b(p_\epsilon)}{p_\epsilon+\frac{1}{\bar{\epsilon}}} +
                              \frac{H_b(p_\epsilon)}{p_\epsilon+\frac{1}{\bar{\epsilon}}} \\
                           &\stackrel{(a)}= \bar{\epsilon}H_{b}(p_\epsilon) - p_\epsilon \bar{\epsilon}\frac{H_b(p_\epsilon)}{p_\epsilon+\frac{1}{\bar{\epsilon}}} +
                              \frac{H_b(p_\epsilon)}{p_\epsilon+\frac{1}{\bar{\epsilon}}} \\
                           &=  \frac{H_b(p_\epsilon)}{p_\epsilon+\frac{1}{\bar{\epsilon}}} + \frac{H_b(p_\epsilon)}{p_\epsilon+\frac{1}{\bar{\epsilon}}} \\
                           &= h_{\epsilon}^{\ast}(z) + \rho_{\epsilon}^{\ast},
        \end{align*}
        where $(a)$ follows from Lemma \ref{lemma:same_max}.
\item
        For $1 - p_{\epsilon} \leq z \leq 1$, the function $h_{\epsilon}^{\ast}(1-\delta)$ can have different terms, and therefore, we separate:
        \begin{align*}
            (Th^{\ast})(z) &= \max\bigg( \sup_{0\leq\delta \leq 1-p_{\epsilon}} \bar{\epsilon}H_{b}(\delta) +\bar{\epsilon}(1-\delta)\frac{H_b(p_\epsilon)}{p_\epsilon+\frac{1}{\bar{\epsilon}}} +
                              \epsilon \frac{H_b(p_\epsilon)}{p_\epsilon+\frac{1}{\bar{\epsilon}}},\\
                           &   \sup_{1 - p_\epsilon \leq \delta \leq z} \bar{\epsilon}H_{b}(\delta) + \bar{\epsilon}(1-\delta)\frac{H_b(p_\epsilon)}{p_\epsilon+\frac{1}{\bar{\epsilon}}} +  \epsilon[\bar{\epsilon}H_b(\delta) - (1-\delta)\bar{\epsilon}\frac{H_b(p_\epsilon)}{p_\epsilon+\frac{1}{\bar{\epsilon}}}]
                              \bigg)\\
                           &\stackrel{(a)}= \max \bigg(   \frac{H_b(p_\epsilon)}{p_\epsilon+\frac{1}{\bar{\epsilon}}} + \frac{H_b(p_\epsilon)}{p_\epsilon+\frac{1}{\bar{\epsilon}}}, \sup_{1 - p_\epsilon \leq \delta \leq z} \bar{\epsilon}(1+\epsilon)H_{b}(\delta) + \bar{\epsilon}\bar{\epsilon}(1-\delta)\frac{H_b(p_\epsilon)}{p_\epsilon+\frac{1}{\bar{\epsilon}}} \bigg)\\
                           &\stackrel{(b)}= 2\frac{H_b(p_\epsilon)}{p_\epsilon+\frac{1}{\bar{\epsilon}}} \\
                           &= h_{\epsilon}^{\ast}(z) + \rho_{\epsilon}^{\ast},
            \end{align*}
        where $(a)$ follows from Lemma \ref{lemma:same_max}, and $(b)$ follows from
        \begin{align*}
             \sup_{1 - p_\epsilon \leq \delta \leq z} \bar{\epsilon}(1+\epsilon)H_{b}(\delta) + \bar{\epsilon}\bar{\epsilon}(1-\delta)\frac{H_b(p_\epsilon)}{p_\epsilon+\frac{1}{\bar{\epsilon}}}
             & \leq \sup_{1 - p_\epsilon \leq \delta \leq z} \bar{\epsilon}(1+\epsilon)H_{b}(\delta) + \sup_{1 - p_\epsilon \leq \delta \leq z}
             \bar{\epsilon}\bar{\epsilon}(1-\delta)\frac{H_b(p_\epsilon)}{p_\epsilon+\frac{1}{\bar{\epsilon}}}\\
            &= \bar{\epsilon}(1+\epsilon)H_{b}(1 - p_\epsilon) + \bar{\epsilon}\bar{\epsilon}(1-(1 - p_\epsilon))\frac{H_b(p_\epsilon)}{p_\epsilon+\frac{1}{\bar{\epsilon}}}\\
            &= \frac{H_b(p_\epsilon)}{p_\epsilon+\frac{1}{\bar{\epsilon}}}\left[ 2\bar{\epsilon}p_{\epsilon}+1+\epsilon\right] \\
            &\leq 2\frac{H_b(p_\epsilon)}{p_\epsilon+\frac{1}{\bar{\epsilon}}}.
        \end{align*}
\end{itemize}
\end{proof}
\section{Accurate rate analysis}\label{app:analysis_accurate}
The rate analysis in Section \ref{sec:relation_coding} was simplified by assuming that each transmitted bit is $\text{Ber}(p)$. Here, we show precisely that our coding scheme can be arbitrary close to $C_{\epsilon}^{\text{fb}}$. The idea is to separate the coding scheme into two parts using a parameter $\lambda$, which is a fixed constant. First, we use the coding scheme from Section \ref{subsec:coding_scheme} to transmit a large number, $nR-\lambda$, of message bits, while a different coding scheme will be used to transmit the remaining $\lambda$ bits. We show that the rate of the overall scheme is essentially determined by the rate of the first coding scheme.
The next lemma will be used for the rate analysis of the first coding scheme,
\begin{lemma}\label{lemma:correction}
Each transmitted bit, $X_i$, can be chosen to be distributed as $\text{Ber}(p-e_i)$, where $0\leq e_i<\frac{1}{|\mathcal{M}_{i-1}|}$.
\end{lemma}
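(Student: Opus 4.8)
The plan is to follow only the \emph{information-bearing} transmissions — those at which the encoder sends the label of the sub-interval containing the message, as opposed to the deterministic $'0'$ that the constraint forces after a received $'1'$ — and to count message points exactly. The first ingredient is that at each such time $i$ the set of possible messages $\mathcal{M}_{i-1}$ is a uniform grid: it consists of $M\triangleq|\mathcal{M}_{i-1}|$ equally spaced points $\{k/M\}_{k=0}^{M-1}$ of $[0,1)$, carrying the uniform distribution. This holds right after a successful transmission because the set is ``expanded uniformly to the unit interval'', and it persists throughout a run of erasures because an erasure is statistically independent of the channel input, so $p(m\mid y^{t})=p(m\mid y^{t-1})$ on an erased transmission; hence the set and its uniform law are unchanged at every pass of the inner (erasure) loop of Algorithm \ref{alg:encoding} as well.

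The second ingredient is Lemma \ref{lemma:constant_amount}: in either labelling $L_1$ or $L_2$, the sub-intervals labelled $'1'$ form a single interval of length exactly $p$ (namely $[\bar p,1)$ under $L_1$ and $[\bar p\bar q,\bar p)$ under $L_2$). The number of the $M$ grid points lying in a length-$p$ interval is either $\lfloor Mp\rfloor$ or $\lfloor Mp\rfloor+1$, so I would remove the ambiguity by declaring the $'1'$-messages, at the ground state, to be the $\lfloor Mp\rfloor$ largest ones; since $\lfloor Mp\rfloor/M\le p$ and $\lfloor Mp\rfloor>Mp-1$, these occupy a sub-interval of the nominal $[\bar p,1)$, so the geometric picture and the input constraint are untouched. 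The swap of Algorithm \ref{alg:encoding} then simply moves this $'1'$-label onto the block of the next $\lfloor Mp\rfloor$ grid points below it (which, using $p\le\tfrac12$ and $\bar p\bar q=1-2p$, lies inside $[\bar p\bar q,\bar p)$), and one checks that this block was labelled $'0'$ at the previous step, so no two consecutive $'1'$s are ever produced.

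Combining these, since the message is uniform on $\mathcal{M}_{i-1}$,
\[
\Pr(X_i=1\mid\mathcal{M}_{i-1})=\frac{\lfloor Mp\rfloor}{M}=p-e_i,\qquad e_i=\frac{Mp-\lfloor Mp\rfloor}{M}\in\Big[0,\tfrac{1}{M}\Big),
\]
which is exactly the asserted distribution $\text{Ber}(p-e_i)$ with $e_i<1/|\mathcal{M}_{i-1}|$.

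The step I expect to be the main obstacle is the bookkeeping in the second ingredient: verifying that the downward rounding of the $'1'$-count to $\lfloor Mp\rfloor$ can be carried out \emph{simultaneously and consistently} in both labellings, so that the swap rule remains well defined, the $(1,\infty)$-RLL constraint is preserved across arbitrarily many consecutive erasures, and the decoder — which knows only $M$ and $y^{i-1}$ — reproduces the same labelling. Everything else reduces to the elementary counting estimate $\lfloor Mp\rfloor\le Mp<\lfloor Mp\rfloor+1$.
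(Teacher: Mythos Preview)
Your proposal is correct and follows essentially the same approach as the paper: both arguments round the number of $'1'$-labelled messages down to $\lfloor |\mathcal{M}_{i-1}|\,p\rfloor$, observe that this count can be preserved exactly when swapping between $L_1$ and $L_2$ (because the $'0'$-labelled messages outnumber the $'1'$-labelled ones), and deduce $\Pr(X_i=1)=\lfloor |\mathcal{M}_{i-1}|\,p\rfloor/|\mathcal{M}_{i-1}|=p-e_i$ with $e_i\in[0,1/|\mathcal{M}_{i-1}|)$. Your version is a bit more explicit about why the uniform posterior and the set $\mathcal{M}_{i-1}$ persist through erasures and about the geometric placement of the discrete $'1'$-blocks, but these are elaborations of the same argument rather than a different route.
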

\begin{proof}
Assume that at time $i$, a procedure begins and its corresponding set of possible messages is $\mathcal{M}_{i-1}$. According to $L_1$, the number of messages that are labelled $'1'$ is $\floor{p|\mathcal{M}_{i-1}|}$, where $\floor{\cdot}$ is the floor operator. The resulting input distribution is $X_i\sim\text{Ber}(\frac{\floor{p|\mathcal{M}_{i-1}|}}{|\mathcal{M}_{i-1}|})$, which can be written also as $X_i\sim\text{Ber}(p-e_i)$ since $p-\frac{1}{|\mathcal{M}_{i-1}|}< \frac{\floor{p|\mathcal{M}_{i-1}|}}{|\mathcal{M}_{i-1}|}\leq p$.

In case of erasure at time $i$, recall that the number of messages that were labelled $'0'$ in $L_1$ is greater than the number of messages labelled $'1'$, and thus, we are able to construct the labelling $L_2$ as follows; $\floor{p|\mathcal{M}_{i-1}|}$ messages that were labelled $'0'$ at the previous transmission are flipped to $'1'$, and all the remaining messages are labelled $'0'$. It is clear that the input distribution is preserved in this case, and upon consecutive erasures, $L_1$ and $L_2$ are being exchanged and the input distribution is not changed. Note that the choices of labelling are made in advance and both encoder and decoder agree on current labelling.
\end{proof}

The encoding procedure occurs repeatedly and is over when the set of possible messages is less or equal than $2^\lambda$. Denote by $e_1,e_2,\dots, e_k$ the correction factors for the $k$ successful transmissions until the scheme is over. Following the same derivations in Section \ref{sec:relation_coding}, it follows that the rate is $\tilde{R} = \frac{\sum_{i=1}^k H_b(p-e_i)}{k(\frac{1}{1-\epsilon}+p) - \sum_{i=1}^{k}e_i}$.

For the $\lambda$ remaining bits, we perform a code where a bit of message is followed by zero and this pair is transmitted repeatedly until a successful transmission. Thus, to send the message bit $'0'$, the pair $'00'$ is repeated until $'00'$ or $'0?'$ are received, and to send the message bit $'1'$, the bits $'10'$ are repeatedly transmitted until a $'1'$ is received. The decoding for this scheme is straightforward, and calculation of the rate gives that $\bar{R}=\frac{1-\epsilon}{2}$.

To summarize, the average rate for the overall coding scheme is
\begin{align*}
R&= \left(\frac{nR-\lambda}{nR}\right) \tilde{R} + \left(\frac{\lambda}{nR}\right)\bar{R}.
\end{align*}
Consider the next lower bound on $R$,
\begin{align*}\label{lower}
R&=\left(\frac{nR-\lambda}{nR}\right)\frac{\sum_{i=1}^k H_b(p-e_i)}{k(\frac{1}{1-\epsilon}+p) - \sum_{i=1}^{k}e_i} + \left(\frac{\lambda}{nR}\right)\frac{1-\epsilon}{2}\nonumber\\
&\geq \left(\frac{nR-\lambda}{nR}\right)\frac{k\min_i H_b(p-e_i)}{k(\frac{1}{1-\epsilon}+p) - k\min_i e_i} + \left(\frac{\lambda}{nR}\right)\frac{1-\epsilon}{2}\nonumber\\
&\stackrel{(a)}\geq \left(\frac{nR-\lambda}{nR}\right)\frac{H_b(p-2^{-\lambda})}{\frac{1}{1-\epsilon}+p} + \left(\frac{\lambda}{nR}\right)\frac{1-\epsilon}{2},
\end{align*}
where $(a)$ follows from Lemma \ref{lemma:correction}, namely, $e_i\in[0,2^{-\lambda})$ for $i=1,\dots,k$.

Letting $n\rightarrow\infty$, we see that $R^{\ast}= \frac{H_b(p-2^{-\lambda})}{\frac{1}{1-\epsilon}+p}$ is achievable. Thus, by choosing $\lambda$ to be arbitrarily large (but still finite), we can make $R^\ast$ arbitrarily close to the capacity $C_\epsilon^\text{fb}$.
\section*{Acknowledgment}
The authors would like to thank Yonglong Li and Guangyue Han for providing us the data for the lower bound in plotted in Fig. \ref{fig:comparison}.
\bibliography{ref}
\bibliographystyle{IEEEtran}
%
\end{document}